\tikzstyle{new style 0}=[fill=white, draw=black, shape=circle, minimum size=1cm, node distance=2cm, font={\small}]
\tikzstyle{small}=[fill=white, draw=black, shape=circle, minimum size=0.3cm, node distance=2cm, font={\small}]
\tikzstyle{dot}=[fill=black, draw=black, shape=circle, minimum size=0.1pt, node distance=2cm, font={\small}]
\tikzstyle{accept}=[fill=white, draw=black, shape=circle, accepting, minimum size=1cm, tikzit draw={rgb,255: red,61; green,195; blue,12}]
\tikzstyle{dottednode}=[fill=white, dotted, draw=black, shape=circle, line width=1.3, minimum size=0.1pt, node distance=2cm, font={\small}, tikzit fill={rgb,255: red,154; green,154; blue,154}]
\tikzstyle{new edge style 1}=[--]
\tikzstyle{greydottednode}=[fill={rgb,255: red,172; green,172; blue,172}, dotted, draw=black, shape=circle, line width=1.3, minimum size=0.1pt, node distance=2cm, font={\small}, tikzit fill={rgb,255: red,194; green,194; blue,194}, tikzit draw={rgb,255: red,102; green,102; blue,102}]
\tikzstyle{new style 1}=[fill={rgb,255: red,49; green,255; blue,35}, draw=black, shape=circle]
\tikzstyle{new edge style 0}=[->]
\tikzstyle{doublearr}=[->,double]
\tikzstyle{dottedline}=[-, dotted, line width=1.5, tikzit draw={rgb,255: red,99; green,99; blue,99}]
\tikzstyle{new edge style 1}=[-, draw={rgb,255: red,26; green,50; blue,189}]
\tikzstyle{new edge style 2}=[-, draw={rgb,255: red,33; green,126; blue,27}]
\tikzstyle{new edge style 3}=[->, draw={rgb,255: red,245; green,60; blue,54}]
\tikzstyle{bluearrow}=[draw={rgb,255: red,25; green,31; blue,148}, ->]
\tikzstyle{greenarrow}=[->, draw={rgb,255: red,55; green,112; blue,18}]
\newtheorem{definition}{Definition}
\newtheorem{theorem}{Theorem}[section]
\newtheorem{lemma}[theorem]{Lemma}
\newtheorem{proposition}[theorem]{Proposition}
\newenvironment{roster}
 {\begin{enumerate}[font=\upshape,label=(\alph*)]}
 {\end{enumerate}}
\newcommand{\N}{\mathbb{N}}
\newcommand{\lcm}{\mathrm{lcm}}
\newcommand{\GenSubset}{\mathrm{GenSubset}}
\newcommand{\GenState}{\mathrm{GenState}}
\newcommand{\Testing}{\mathrm{Testing}}
\newcommand{\gst}[1]{\mathsf{#1}}
\title{Descriptional Complexity of Winning Sets of Regular Languages}
\author{
  Pierre Marcus \\
  ENS Lyon, Lyon, France
  \and
  Ilkka T\"orm\"a\thanks{Author supported by Academy of Finland grant 295095.} \\
  Department of Mathematics and Statistics, \\
  University of Turku, Turku, Finland \\
  \texttt{iatorm@utu.fi}
}
\begin{document}

\maketitle

\begin{abstract}
We investigate certain word-construction games with variable turn orders.
In these games, Alice and Bob take turns on choosing consecutive letters of a word of fixed length, with Alice winning if the result lies in a predetermined target language.
The turn orders that result in a win for Alice form a binary language that is regular whenever the target language is, and we prove some upper and lower bounds for its state complexity based on that of the target language.
\end{abstract}

\section{Introduction}
\label{sec:Intro}

Let us define a word-construction game of two players, Alice and Bob, as follows.
Choose a number $n \in \N$, a set of binary words $T \subseteq \{0, 1\}^n$ called the \emph{target set} and a word $w \in \{A, B\}^n$ called the \emph{turn order}, where $A$ stands for Alice and $B$ for Bob.
The players construct a word $v \in \{0, 1\}^n$ so that, for each $i = 0, 1, \ldots, n-1$ in this order, the player specified by $w_i$ chooses the symbol $v_i$.
If $v \in T$, then Alice wins the game, and otherwise Bob wins.
The existence of a winning strategy for Alice depends on both the target set and the turn order.
We fix the target set $T$ and define its \emph{winning set} $W(T)$ as the set of those words over $\{A,B\}$ that result in Alice having a winning strategy.
We extend this definition to languages $L \subseteq \{0, 1\}^*$ by considering each length separately, so that $W(L) \subseteq \{A, B\}^*$ can also contain words of variable lengths.

Winning sets were defined under this name in~\cite{salo2014playing} in the context of symbolic dynamics, but they have been studied before that under the name of \emph{order-shattering sets} in~\cite{Anstee2002,Friedl2003}.
The winning set has several interesting properties: it is downward closed in the index-wise partial order induced by $A < B$ (as changing an $A$ to a $B$ always makes the game easier for Alice) and it has the same cardinality as the target set.
This latter property was used in~\cite{peltomaki2019winning} to study the growth rates of substitutive subshifts.

If the target language $L$ is regular, then so is $W(L)$, as it can be recognized by an alternating finite automaton, which only recognizes regular languages~\cite{Chandra1981}.
Thus we can view $W$ as an operation on the class of regular languages, and in this article we study its state complexity in the general case and in several subclasses.
In our construction the alternating automaton has the same state set as the original DFA, so our setting resembles parity games, where two players construct a path in a finite automaton~\cite{Zielonka1998}.
The main difference is that in a parity game, the player who chooses the next move is the owner of the current state, whereas in our word-construction game it is determined by the turn order word.

In the general case, the size of the minimal DFA for $W(L)$ can be doubly exponential in that of $L$.
We derive a lower, but still superexponential, upper bound for bounded regular languages (languages that satisfy $L \subseteq w_1^* w_2^* \cdots w_k^*$ for some words $w_i$).
We also study certain bounded permutation invariant languages, where membership is defined only by the number of occurrences of each symbol.
In particular, we explicitly determine the winning sets of the languages $L_k = (0^* 1)^k 0^*$ of words with exactly $k$ occurrences of $1$.
% In the final section we consider the computational complexity of determining membership in $W(L)$ and whether a given regular language intersects it, which are P-complete and PSPACE-complete respectively.

In this article we only consider the binary alphabet, but we note that the definition of the winning set can be extended to languages $L \subseteq \Sigma^*$ over an arbitrary finite alphabet $\Sigma$ in a way that preserves the properties of downward closedness and $|L| = |W(L)|$.
The turn order word is replaced by a word $w \in \{1, \ldots, |\Sigma|\}^*$.
On turn $i$, Alice chooses a subset of size $w_i$ of $\Sigma$, and Bob chooses the letter $v_i$ from this set.

\section{Definitions}

We present the standard definitions and notations used in this article.
For a set $\Sigma$, we denote by $\Sigma^*$ the set of finite words over it, and the length of a word $w \in \Sigma^n$ is $|w| = n$.
The notation $|w|_a$ means the number of occurrences of symbol $a \in \Sigma$ in $w$.
%The reversal of a word $w = w_0 \cdots w_{n-1}$ is $w^R = w_{n-1} \cdots w_0$.
The empty word is denoted by $\lambda$.
For a language $L \subseteq \Sigma^*$ and $w \in \Sigma^*$, denote $w^{-1} L = \{ v \in \Sigma^* \;|\; w v \in L \}$.

A finite state automaton is a tuple $\mathcal{A} = (Q, \Sigma, q_0, \delta, F)$ where $Q$ is a finite state set, $\Sigma$ a finite alphabet, $q_0 \in Q$ the initial state, $\delta$ is the transition function and $F \subseteq Q$ is the set of final states.
The language accepted from state $q \in Q$ is denoted $\mathcal{L}_q(\mathcal{A}) \subseteq \Sigma^*$, and the language of $\mathcal{A}$ is $\mathcal{L}(\mathcal{A}) = \mathcal{L}_{q_0}(\mathcal{A})$.
The type of $\delta$ and the definition of $\mathcal{L}(\mathcal{A})$ depend on which kind of automaton $\mathcal{A}$ is.
\begin{itemize}
\item
  If $\mathcal{A}$ is a deterministic finite automaton, or DFA, then $\delta : Q \times \Sigma \to Q$ gives the next state from the current state and an input symbol.
  We extend it to $Q \times \Sigma^*$ by $\delta(q, \lambda) = q$ and $\delta(q, s w) = \delta(\delta(q, s), w)$ for $q \in Q$, $s \in \Sigma$ and $w \in \Sigma^*$.
  The language is defined by $\mathcal{L}_q(\mathcal{A}) = \{ w \in \Sigma^* \;|\; \delta(q, w) \in F \}$.
\item
  If $\mathcal{A}$ is a nondeterministic finite automaton, or NFA, then $\delta : Q \times \Sigma \to 2^Q$ gives the set of possible next states.
  We extend it to $Q \times \Sigma^*$ by $\delta(q, \lambda) = \{q\}$ and $\delta(q, s w) = \bigcup_{p \in \delta(q, s)} \delta(p, w)$ for $q \in Q$, $s \in \Sigma$ and $w \in \Sigma^*$.
  The language is defined by $\mathcal{L}_q(\mathcal{A}) = \{ w \in \Sigma^* \;|\; \delta(q, w) \cap F \neq \emptyset \}$.
%\item
%  If $\mathcal{A}$ is an alternating finite automaton, or AFA, then $\delta : Q \times \Sigma \to 2^{2^Q}$ gives those sets of states from which the remaining input word should accept.
%  It is often more convenient to handle $2^{2^Q}$ as the set of Boolean functions $(Q \to \{0,1\}) \to \{0,1\}$.
%  The intuition is that a transition $\delta(q, s)$ initializes a sub-computation for each state $p \in Q$ and the remaining letters of the input word, the result of which is a Boolean value $f(p)$, and $\delta(q, s)(f)$ gives a new Boolean value for $q$.
%  We extend $\delta$ to $Q \times \Sigma^*$ by $\delta(q, \lambda)(f) = f(q)$ and $\delta(q, s w)(f) = \delta(q, s)(p \mapsto \delta(p, w)(f))$ for $q \in Q$, $s \in \Sigma$ and $w \in \Sigma^*$.
%  The language is defined by $\mathcal{L}_q(\mathcal{A}) = \{ w \in \Sigma^* \;|\; \delta(q, w)(\mathbf{1}_F) = 1 \}$ where $\mathbf{1}_F$ is the indicator function of $F$.
\end{itemize}
%All three types of finite automata recognize exactly the regular languages.
An NFA can be converted into an equivalent DFA by the standard subset construction. %, and an AFA can be converted into an NFA by taking as states the Boolean functions $(Q \to \{0,1\}) \to \{0,1\}$ (or equivalently the sets of sets $2^{2^Q}$)~\cite{Chandra1981}.

Two states $p, q \in Q$ of $\mathcal{A}$ are equivalent, denoted $p \sim q$, if $\mathcal{L}_p(\mathcal{A}) = \mathcal{L}_q(\mathcal{A})$.
Every regular language $L \subseteq \Sigma^*$ is accepted by a unique DFA with the minimal number of states, which are all nonequivalent, and every other DFA that accepts $L$ has an equivalent pair of states.
Two words $v, w \in \Sigma^*$ are congruent by $L$, denoted $v \equiv_L w$, if for all $u_1, u_2 \in \Sigma^*$ we have $u_1 v u_2 \in L$ iff $u_1 w u_2 \in L$.
They are right-equivalent, denoted $v \sim_L w$, if for all $u \in \Sigma^*$ we have $v u \in L$ iff $w u \in L$.
The set of equivalence classes $\Sigma^* / {\equiv_L}$ is the syntactic monoid of $L$, and if $L$ is regular, then it is finite.
In that case the equivalence classes of $\sim_L$ can be taken as the states of the minimal DFA of $L$.

Let $\mathcal{P} : 2^{\Sigma^*} \to 2^{\Sigma^*}$ be an operation on languages, which may not be defined everywhere.
The (regular) state complexity of $\mathcal{P}$ is the function $f : \N \to \N$, where $f(n)$ is the maximal number of states in a minimal automaton of $\mathcal{P}(\mathcal{L}(\mathcal{A}))$ for an $n$-state DFA $\mathcal{A}$.

\section{Winning Sets}

In this section we define winning sets of binary languages, present the construction of the winning set of a regular language using alternating automata, and prove some general lemmas.
We defined the winning set informally at the beginning of Section~\ref{sec:Intro}.
Now we give a more formal definition which does not explicitly mention games.

\begin{definition}[Winning Set]
  Let $n \in \N$ and $T \subseteq \{0,1\}^n$ be arbitrary.
  The \emph{winning set} of $T$, denoted $W(T) \subseteq \{A, B\}^n$, is defined inductively as follows.
  If $n = 0$, then $T$ is either the empty set or $\{\lambda\}$, and $W(T) = T$.
  If $n \geq 1$, then $W(T) = \{ A w \;|\; w \in W(0^{-1} T) \cup W(1^{-1} T) \} \cup \{ B w \;|\; w \in W(0^{-1} T) \cap W(1^{-1} T) \}$.
  
  For a language $L \subseteq \{0, 1\}^*$, we define $W(L) = \bigcup_{n \in \N} W(L \cap \{0,1\}^n)$.
\end{definition}

The idea is that for Alice to win on a turn order of the form $A w$, she has to choose either $0$ or $1$ as the first letter $v_0$ of the constructed word $v$, and then follow a winning strategy on the target set $v_0^{-1} T$ and turn order $w$.
On a word $B w$, Alice should have a winning strategy on $v_0^{-1} T$ and $w$ no matter which letter Bob chooses as $v_0$.

In the next result, a language $L$ over a linearly ordered alphabet $\Sigma$ is \emph{downward closed} if $v \in L$, $w \in \Sigma^{|v|}$ and $w_i \leq v_i$ for each $i = 0, \ldots, |v|-1$ always implies $w \in L$.

\begin{proposition}[Propositions~3.8 and~5.4 in~\cite{salo2014playing}]
    The winning set $W(L)$ is downward closed (with respect to the order $A < B$) and satisfies $|W(L)| = |L|$.
    If $L$ is a regular language, then $W(L)$ is also regular.
\end{proposition}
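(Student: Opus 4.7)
My plan is to treat the three assertions separately. Downward closure and the cardinality identity will both follow from straightforward inductions on the length $n$ of the target. For regularity I will lift a DFA for $L$ to an alternating finite automaton over $\{A, B\}$, reading the existential/universal structure directly off the recursive definition of $W$, and then invoke the classical theorem \cite{Chandra1981} that alternating automata recognize only regular languages.

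For downward closure, fix $n$, $T \subseteq \{0,1\}^n$, and suppose $v \in W(T)$ and $w \in \{A,B\}^n$ satisfy $w_i \leq v_i$ componentwise. The case $n = 0$ is trivial. In the inductive step, write $v = c v'$, $w = d w'$. If $c = d$, apply induction inside $W(0^{-1}T) \cup W(1^{-1}T)$ (case $c = A$) or $W(0^{-1}T) \cap W(1^{-1}T)$ (case $c = B$). The remaining case $c = B$, $d = A$ reduces to the trivial containment $W(0^{-1}T) \cap W(1^{-1}T) \subseteq W(0^{-1}T) \cup W(1^{-1}T)$ after applying the induction hypothesis to $v'$ and $w'$. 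For the cardinality identity, the two pieces of $W(T)$ beginning with $A$ and with $B$ are disjoint, so
\[
|W(T)| = |W(0^{-1}T) \cup W(1^{-1}T)| + |W(0^{-1}T) \cap W(1^{-1}T)| = |W(0^{-1}T)| + |W(1^{-1}T)|
\]
by inclusion--exclusion, and this equals $|0^{-1}T| + |1^{-1}T| = |T|$ by induction.

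For regularity, let $\mathcal{A} = (Q, \{0,1\}, q_0, \delta, F)$ be a DFA for $L$. I would build an alternating automaton $\mathcal{A}'$ on the same state set $Q$ and final set $F$, reading $\{A, B\}$, in which state $q$ branches existentially to $\{\delta(q,0), \delta(q,1)\}$ on input $A$ and universally to the same pair on input $B$. A straightforward induction on $|w|$ then shows that from any $q \in Q$, the word $w$ is accepted by $\mathcal{A}'$ iff $w \in W(\mathcal{L}_q(\mathcal{A}) \cap \{0,1\}^{|w|})$; the existential and universal branchings mirror exactly the $\cup$ and $\cap$ in the recursive definition of $W$. Specialising to $q = q_0$ and taking the union over $n$ yields $W(L) = \mathcal{L}(\mathcal{A}')$, which is regular. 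The main subtlety that requires care is the length-indexed bookkeeping, since $W$ is defined level by level on a length-indexed target while the AFA reads words of varying length; the induction must keep both views of the data consistently aligned.
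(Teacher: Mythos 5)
Your proposal is correct, and for the regularity claim it follows exactly the route the paper indicates: an alternating automaton on the same state set with existential branching on $A$ and universal branching on $B$, concluding via \cite{Chandra1981}. The paper defers the downward-closure and cardinality claims to the cited reference \cite{salo2014playing}, but your inductions (with the inclusion--exclusion step for cardinality) are the standard arguments and are sound.
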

%\begin{proof}
%    The first claim is clear since changing a $B$ into an $A$ makes the game easier for Alice.
%    The other two claims correspond to Proposition~5.4 and Proposition~3.8 of~\cite{salo2014playing}.
%\end{proof}

From an DFA $\mathcal{A}$, we can easily construct an alternating automaton for $W(\mathcal{A})$, with the same states. $B$ letters are handled with universal
transitions and $A$ with existential transitions. We don't give an explicit construction of this alternating automaton, but we work
on a corresponding NFA described in the next definition.

\begin{definition}[Winning Set Automaton]
    \label{winningSetAuto}
    %By examinzing the previous transformation from an alternating automaton to a non deterministic one, and then to a deterministic automaton,
    %we can give a direct definition of an automaton which recognizes $W(\mathcal{L}(\mathcal{A}))$.
    Let $\mathcal{A} = (Q, \{0,1\}, q_0, \delta, F)$ be a binary DFA.
    We define a ``canonical'' NFA for $W(\mathcal{L}(\mathcal{A}))$ as follows. The states are subsets of $Q$. From a state $S \subseteq Q$, reading $B$ leads to 
    the set containing all the successors in $\mathcal{A}$ of elements of $S$. Reading $A$ leads nondeterministically to all 
    sets containing for each element of $S$, either its successor when reading $0$, or the one when reading $1$.
    The only initial state is $\{q_0\}$, and final states are all subsets of $F$.

    %From now we always work on the determinzation of this NFA.
    %Let $\mathcal{A} = (Q, \Sigma = \{0,1\}, q_0, F, \delta)$.

    We usually work on the determinization of this NFA, which we denote by $W(\mathcal{A}) = (2^{2^Q}, \{A, B\}, \{\{q_0\}\}, F_W, \delta_W)$.
    Here $F_W = \{ \gst{S} \in 2^{2^Q} \;|\; \exists S \in \gst{S} : S \subseteq F \}$.
    A state $\gst{S}$ of $W(\mathcal{A})$ is called a \emph{game state}.
    It represents a situation where Alice can force the game to be in one of the sets $S \in \gst{S}$, and Bob can choose the actual state $q \in S$.
    %When we read a $B$, a state of a set $S$ in $H$ is replaced
    %by its direct successor. When we read an $A$, a state of a set $S$ produces to versions of 
    The transition function $\delta_W $ is defined by
    \begin{align*}
      \delta_W (\{S\},A) & =
                           \{\{\delta(q,f(q)) \;|\; q \in S\} \;|\; f : S \rightarrow \{0,1\}\} \\
      \delta_W (\{S\},B) & =
                           \{\{\delta(q,b) \;|\; q \in S, b \in \{0,1\}\}\}
    \end{align*}
    and
    $\delta_W (\gst{S}, c) = \bigcup_{S \in \gst{S}} \delta_W (\{S\},c)$
    for a game state $\gst{S}$ and $c \in \{A, B\}$.

    \iffalse
    $\delta_W (\{\emptyset\},A) =
    \{\emptyset\}$

    $\delta_W (\{ S \cup \{q\}\},A) =
    \{ S' \cup \{\delta(q,b) \} | S' \in \delta_W (S,A), b \in \{0,1\}\}$
    
    $\delta_W (\{\{q\}\},A) =
    \{ \{delta(q,0) \}, \{delta(q,1) \} \}$\\
    
    $\delta_W (\{\{q\}\},B) =
    \{ \{delta(q,0), delta(q,1) \} \}$
    
    \fi

\end{definition}

The following observations follow easily from the definition of $W(L)$. 

\begin{lemma} \label{basicProperties}
    Let $\mathcal{A}$ be a DFA with alphabet $\{0,1\}$, and $W(\mathcal{A})$ the winning set DFA from Definition~\ref{winningSetAuto},
    and $\delta_W$ the iterated transition function for  $W(\mathcal{A})$.
    Let $\gst{P},\gst{R},\gst{S},\gst{T}$ be game states of $W(\mathcal{A})$, $P,R,S,T,V \subseteq 2^Q$ sets of states, and $w$ a word over $\{A,B\}$.
    \begin{roster}
        \item \label{unionEquiv} 
        %Let $\mathcal{A}$ a DFA with alphabet $\{0,1\}$, and $W(\mathcal{A})$ the winning set DFA from \ref{winningSetAuto},
        %and the $\delta_W $ iterated transition function for  $W(\mathcal{A})$.
        %Let $w$ a word over $\{A,B\}$.
          If $\delta_W (\gst{P},w) = \gst{R}$ and $\delta_W (\gst{S},w) = \gst{T}$ then $\delta_W (\gst{P} \cup \gst{S}, w) = \gst{R} \cup \gst{T}$.
        \item \label{unionInsideEquiv}
          % Let $\mathcal{A}$ a DFA with alphabet $\{0,1\}$, and $W(\mathcal{A})$ the winning set DFA from \ref{winningSetAuto}.
          % Let $X$ a state of $W(\mathcal{A})$.
          If $R \in \delta_W( \{S\},w )$ and $V \in \delta_W( \{T\},w )$,
          then some $P \in \delta_W( \{S \cup T\},w )$ satisfies $P \subseteq R \cup V$.
          Conversely, for each $P \in \delta_W(\{S \cup T\}, w)$ there exist $R \in \delta_W( \{S\},w )$ and $V \in \delta_W( \{T\},w )$ with $P = R \cup V$.
        \item \label{subsetEquiv} If $S,R \in \gst{T}$ and $S \subseteq R$, then $\gst{T} \sim \gst{T} \setminus \{R\}$.
        \item \label{removeStateEquiv} 
        %Let $X$ a state of $W(\mathcal{A})$.
        If $S \in \gst{S}$ and some $q \in S$ has no path to a final state, then $\gst{S} \sim \gst{S} \setminus \{S\}$.
        \item \label{finalSinkEquiv}
        %Let $\mathcal{A}$ a DFA with alphabet $\{0,1\}$, and $W(\mathcal{A})$ the winning set DFA from \ref{winningSetAuto}.
        %Let $X$ a state of $W(\mathcal{A})$.
        If $S \in \gst{S}$ and there is a sink state $q \in S \cap F$, then $\gst{S} \sim ( \gst{S} \setminus \{S\}) \cup \{S \setminus \{q\} \}$.
        \item \label{smallWordEquiv}
          If $\gst{S} = \gst{R} \cup \{S\}$ and the shortest path from some $q \in S$ to an final state in $\mathcal{A}$
        has length $\ell$, then for all $w \in \{A,B\}^{< \ell}$,
        $\delta_W(\gst{S},w)$ is final iff $\delta(\gst{R},w)$ is.
    \end{roster}
   
\end{lemma}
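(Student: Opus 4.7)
All six items reduce to unwinding Definition~\ref{winningSetAuto} of $\delta_W$ by induction on $|w|$; the technical work lies in (b) and in a descendant-tracking lemma used by (d) and (f).

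Part (a) is immediate: the identity $\delta_W(\gst{P} \cup \gst{S}, c) = \delta_W(\gst{P}, c) \cup \delta_W(\gst{S}, c)$ is baked into the single-letter definition, and a straightforward induction extends it to all $w$.

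For (b) I would prove the converse direction first by induction on $|w|$, then derive the \emph{monotonicity corollary} that \emph{$S' \subseteq S$ implies every $P \in \delta_W(\{S\}, w)$ contains some $P' \in \delta_W(\{S'\}, w)$}, and finally use both to prove the forward direction. In the inductive step, the $B$-case exploits that the one-letter $B$-transition commutes with unions: $\{\delta(q,b) \;|\; q \in S \cup T,\, b \in \{0,1\}\}$ equals the union of the analogous sets for $S$ and $T$. The $A$-case, for the converse direction, restricts an Alice function $h : S \cup T \to \{0,1\}$ to $S$ and to $T$; for the forward direction, it glues $f : S \to \{0,1\}$ and $g : T \to \{0,1\}$ into a single $h$ (taking $f$ on the overlap $S \cap T$) to obtain an intermediate $P_1 \in \delta_W(\{S \cup T\}, A)$ with $P_1 \subseteq R_1 \cup V_1$, after which one applies the inductive hypothesis to $(R_1, V_1)$ and concludes using the monotonicity corollary.

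Parts (c) and (e) are structural, building on (a) and (b). In (c), the inclusion $\gst{T} \setminus \{R\} \subseteq \gst{T}$ gives one direction via (a); conversely, since $S \subseteq R$ yields $R = S \cup R$, any final $P \in \delta_W(\{R\}, w)$ decomposes by (b) as $P = P_S \cup P_R$ with $P_S \in \delta_W(\{S\}, w) \subseteq \delta_W(\gst{T} \setminus \{R\}, w)$ and $P_S \subseteq P \subseteq F$. In (e), the sink condition $\delta(q,0) = \delta(q,1) = q$ makes $q$ a fixed point of every transition; hence $P \in \delta_W(\{S\}, w)$ iff $P = P' \cup \{q\}$ with $P' \in \delta_W(\{S \setminus \{q\}\}, w)$, and since $q \in F$ we have $P \subseteq F$ iff $P' \subseteq F$.

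The main obstacle is (d) and (f), where I would establish a \emph{descendant-tracking lemma}: for every $q \in S$ and every $P \in \delta_W(\{S\}, w)$, $P$ contains at least one state of $\mathcal{A}$ reachable from $q$ by a word of length exactly $|w|$. This is a direct induction on $|w|$: a $B$-step inserts both $\delta(q,0)$ and $\delta(q,1)$ into the next game set, and an $A$-step inserts the single image $\delta(q, f(q))$ for the chosen Alice function $f$, so the inductive hypothesis continues from one of these images. Granted the lemma, (d) follows because $q$ having no path to a final state forces every such descendant out of $F$, so no $P \in \delta_W(\{S\}, w)$ can satisfy $P \subseteq F$ and $S$ contributes nothing to acceptance. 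Similarly, (f) follows because for $|w| < \ell$ any descendant of $q$ is reached in fewer than $\ell$ steps of $\mathcal{A}$ and therefore lies outside $F$.
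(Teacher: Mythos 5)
Your proposal is correct; note that the paper itself gives no proof of this lemma (it is stated as following ``easily from the definition''), so there is no authorial argument to diverge from. Your write-up supplies exactly the details the paper leaves implicit --- additivity of $\delta_W$ over unions, induction on $|w|$ with the converse of (b) feeding a monotonicity corollary into the forward direction, and descendant-tracking for (d) and (f) --- and each step checks out.
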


\begin{lemma} \label{lemmasEquivIncl}
  Recall the assumptions of Lemma~\ref{basicProperties}.
  \begin{roster}
  \item \label{equivChainDoubleIncl}
    Suppose that for every $S \in \gst{S}$ there exists $R \in \gst{R}$ with $R \subseteq S$, and reciprocally.
    Then $\gst{S} \sim \gst{R}$.
  \item  \label{equivChainSingletons}
    Let $v, w \in \{A, B\}^*$. % are equivalents for the chains, i.e
    % $\forall H, \delta_W (H,w) \sim \delta_W (H,w')$
    If for all $q \in Q$, the game states $\delta_W (\{\{q\}\},v)$ and $\delta_W (\{\{q\}\},w)$ are either both accepting or both rejecting, then $v \equiv_{W(\mathcal{L}(\mathcal{A}))} w$.
  \end{roster}
    
\end{lemma}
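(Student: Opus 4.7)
The plan is to show the stronger statement that for every game state $\gst{P}$ of $W(\mathcal{A})$ one has $\delta_W(\gst{P},v)\sim\delta_W(\gst{P},w)$; specialising $\gst{P}=\delta_W(\{\{q_0\}\},u_1)$ and reading any continuation $u_2$ then yields $v\equiv_{W(\mathcal{L}(\mathcal{A}))}w$. Using part~\ref{unionEquiv} of Lemma~\ref{basicProperties}, $\delta_W(\gst{P},v)=\bigcup_{S\in\gst{P}}\delta_W(\{S\},v)$, and equivalence of game states is preserved by union (acceptance of a union is the disjunction of the parts' acceptances, and this persists through any suffix because $\delta_W$ commutes with union). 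The problem therefore reduces to proving $\delta_W(\{S\},v)\sim\delta_W(\{S\},w)$ for every $S\subseteq Q$.

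I would prove this by induction on $|S|$. The base case $|S|=1$ is the hypothesis, read as the assertion that the singletons $\delta_W(\{\{q\}\},v)$ and $\delta_W(\{\{q\}\},w)$ have the same behaviour for every $q\in Q$. For the inductive step, write $S=S'\cup\{q\}$ and apply part~\ref{unionInsideEquiv} of Lemma~\ref{basicProperties}: every $P\in\delta_W(\{S\},v)$ factors as $R\cup V$ with $R\in\delta_W(\{S'\},v)$ and $V\in\delta_W(\{\{q\}\},v)$, and reciprocally, for any such $R$ and $V$ some $P''\in\delta_W(\{S\},v)$ satisfies $P''\subseteq R\cup V$. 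Combined with the inductive hypotheses and part~\ref{equivChainDoubleIncl} of the present lemma, one extracts $R'\in\delta_W(\{S'\},w)$ with $R'\subseteq R$ and $V'\in\delta_W(\{\{q\}\},w)$ with $V'\subseteq V$; part~\ref{unionInsideEquiv} then supplies a $P'\in\delta_W(\{S\},w)$ with $P'\subseteq R'\cup V'\subseteq P$. The symmetric direction is identical, so part~\ref{equivChainDoubleIncl} concludes $\delta_W(\{S\},v)\sim\delta_W(\{S\},w)$.

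The main obstacle lies in the interaction between the base case and the inductive step: part~\ref{equivChainDoubleIncl} demands the inclusion-witness form of equivalence between $\delta_W(\{\{q\}\},v)$ and $\delta_W(\{\{q\}\},w)$, whereas the hypothesis only records their common $F_W$-status. Bridging this gap will require first normalising the singleton game states using parts~\ref{subsetEquiv} and~\ref{removeStateEquiv} of Lemma~\ref{basicProperties} to strip non-minimal and dead-state elements, and then reading the accepting/rejecting condition on the normalised form so that it forces a genuine matching of the remaining (minimal, live) elements. Organising this normalisation so that it commutes with the union decomposition used in the inductive step is where the bulk of the technical bookkeeping will go.
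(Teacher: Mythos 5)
Your proposal addresses only part~\ref{equivChainSingletons} (part~\ref{equivChainDoubleIncl} is invoked as a tool but never proved), and for part~\ref{equivChainSingletons} it aims at a strictly stronger intermediate claim than the paper does. The paper's proof uses the same decomposition skeleton as yours --- reduce to $\gst{S}=\{S\}$ via Lemma~\ref{basicProperties}\ref{unionEquiv}, then split over $q\in S$ via Lemma~\ref{basicProperties}\ref{unionInsideEquiv} --- but it transfers only the single bit of acceptance, never an equivalence of game states: if $P\in\delta_W(\{S\},v)$ satisfies $P\subseteq F$, write $P=\bigcup_{q\in S}R_q$ with $R_q\in\delta_W(\{\{q\}\},v)$; each $R_q\subseteq F$, so each singleton game state is accepting, hence by hypothesis so is each $\delta_W(\{\{q\}\},w)$, and reassembling the witnesses $T_q\subseteq F$ via Lemma~\ref{basicProperties}\ref{unionInsideEquiv} produces an accepting set in $\delta_W(\{S\},w)$.

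Your route instead tries to establish $\delta_W(\{S\},v)\sim\delta_W(\{S\},w)$, and it breaks in two places. First, the base case, which you flag as ``technical bookkeeping'': no normalisation via Lemma~\ref{basicProperties}\ref{subsetEquiv} and~\ref{removeStateEquiv} can upgrade ``both rejecting'' to an element-by-element matching, because two normalised game states can both be rejecting yet inequivalent. For instance, with $a,b$ non-final live states such that $a$ accepts every one-letter continuation and $b$ accepts none, the game states $\{\{a\},\{b\}\}$ and $\{\{a,b\}\}$ are both antichains of live states and both rejecting, yet after reading $A$ the first is accepting and the second is not; such a pair arises exactly as $\delta_W(\{\{q\}\},A)$ versus $\delta_W(\{\{q\}\},B)$ for a suitable $q$, so your intermediate claim is genuinely false under the stated hypothesis, not merely hard to verify. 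Second, the inductive step: applying part~\ref{equivChainDoubleIncl} requires, for each $R\in\delta_W(\{S'\},v)$, some $R'\in\delta_W(\{S'\},w)$ with $R'\subseteq R$, but your inductive hypothesis only supplies $\delta_W(\{S'\},v)\sim\delta_W(\{S'\},w)$, and part~\ref{equivChainDoubleIncl} is a one-way implication from the inclusion-witness condition to $\sim$; the converse extraction you need is not available. Your instinct that something stronger than equi-acceptance of singletons is what one really wants (and is what the paper's later applications actually verify, namely $\delta_W(\{\{q\}\},v)\sim\delta_W(\{\{q\}\},w)$) is sound, but under the hypothesis as literally given the argument must confine itself, as the paper's does, to propagating acceptance rather than equivalence.
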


\begin{proof}
  \begin{roster}
  \item
    Let $w \in \{A,B\}^*$ be such that $\delta_W(\gst{S}, w)$ is accepting.
    Then some set $T \in \delta_W(\gst{S}, w)$ consists of accepting states of $\mathcal{A}$.
    By Lemma~\ref{basicProperties}\ref{unionEquiv} there exists $S \in \gst{S}$ with $T \in \delta_W(\{S\}, w)$.
    Let $R \in \gst{R}$ be such that $R \subseteq S$.
    Then there exists $V \in \delta_W(\gst{R}, w)$ with $V \subseteq T$ by Lemma~\ref{basicProperties}\ref{unionInsideEquiv}, so that $\delta_W(\gst{R}, w)$ is also accepting.
  \item
    Let $\gst{S} \in 2^{2^Q}$ be a game state and suppose $\delta_W(\gst{S}, v)$ is accepting, so there exists $P \in \delta_W(\gst{S}, v)$ consisting of accepting states of $\mathcal{A}$.
    By Lemma~\ref{basicProperties}\ref{unionEquiv} we have $\delta_W(\gst{S}, v) = \bigcup_{S \in \gst{S}} \delta_W(\{S\}, v)$, and similarly for $w$, so we may assume $\gst{S} = \{S\}$ is a singleton.
    By Lemma~\ref{basicProperties}\ref{unionInsideEquiv}, for each $q \in S$ there exists $R_q \in \delta_W(\{\{q\}\}, v)$ such that $P = \bigcup_{q \in S} R_q$.
    In particular each $R_q$ consists of accepting states of $\mathcal{A}$, so each $\delta_W(\{\{q\}\}, v)$ is accepting.
    Then $\delta_W(\{\{q\}\}, w)$ is also accepting, so there exists $T_q \in \delta_W(\{\{q\}\}, w)$ with $T_q \subseteq F$.
    By Lemma~\ref{basicProperties}\ref{unionInsideEquiv} there exists $P' \in \delta_W(\{S\}, w)$ with $P' \subseteq \bigcup_{q \in S} T_q \subseteq F$, and then $\delta_W(\{S\}, w)$ is accepting.
    This shows $v \equiv w$.
  \end{roster}
\end{proof}

%\section{Upper bound}

\begin{proposition} \label{prop:upperBound}
    Let $\mathcal{A}$ an $n$-state DFA. 
    The number of states in the minimal DFA for $W(\mathcal{L}(\mathcal{A}))$ is at most the Dedekind number $D(n)$.
\end{proposition}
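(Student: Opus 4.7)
The plan is to use Lemma~\ref{basicProperties}\ref{subsetEquiv} to reduce every game state of $W(\mathcal{A})$ to an antichain in the Boolean lattice $(2^Q, \subseteq)$, and then to invoke the fact that the number of such antichains is exactly $D(n)$ when $|Q| = n$.

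First, for any game state $\gst{S} \in 2^{2^Q}$, I would iteratively apply Lemma~\ref{basicProperties}\ref{subsetEquiv}: whenever $\gst{S}$ contains two distinct sets $S, R$ with $S \subseteq R$, remove $R$. Each such removal preserves the $\sim$-equivalence class (by the lemma) and strictly decreases the cardinality of $\gst{S}$, so the process terminates. Its terminal value contains no two comparable elements, i.e.\ it is exactly the antichain $\min(\gst{S}) = \{ S \in \gst{S} \mid \nexists S' \in \gst{S}, S' \subsetneq S \}$ of inclusion-minimal members of $\gst{S}$. By transitivity of $\sim$, we obtain $\gst{S} \sim \min(\gst{S})$.

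Consequently, every $\sim$-class of game states contains an antichain of $2^Q$, so the number of states of the minimal DFA for $W(\mathcal{L}(\mathcal{A}))$ is at most the number of antichains of $2^Q$. By the definition of the Dedekind number, this count is precisely $D(n)$, yielding the claimed upper bound.

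The step requiring the most care — though hardly an obstacle — is verifying that the iterative reduction terminates at a genuine antichain rather than some intermediate state; this is immediate from the termination condition, which is exactly the absence of two comparable elements. Everything else is a direct appeal to a previously established lemma and to the definition of $D(n)$.
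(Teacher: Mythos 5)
Your proof is correct and follows exactly the paper's own (one-line) argument: reduce each game state to an equivalent antichain via Lemma~\ref{basicProperties}\ref{subsetEquiv} and count antichains of $2^Q$, which is $D(n)$ by definition. You merely spell out the iterative removal and its termination, which the paper leaves implicit (and you cite the correct item of the lemma, whereas the paper's reference to ``Lemma 3.1(f)'' appears to be an off-by-a-few typo for the subset-removal item).
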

\begin{proof}
    %Apply \ref{subsetEquiv} on the DFA of \ref{winningSetAuto}. 
  The Dedekind number $D(n)$ is the number of antichains in $2^{2^Q}$ by inclusion, and every game state is equivalent to an antichain by Lemma 3.1(f).
  \end{proof}

  Note that the growth of $D(n)$ is doubly exponential in $n$.

  We have computed the exact state complexity of the winning set operation for DFAs with at most $5$ states; the $6$-state case is no longer feasible with our program and computational resources.
  The sequence begins with $1, 4, 16, 62, 517$.

\section{Doubly exponential lower bound}
    In this section we present the construction of a family of automata for which the number of states in the minimal
    winning set automaton is asymptotically optimal, that is to say doubly exponential.
    The idea is to reach any desired antichain of subsets of a special subset of states by reading the appropriate word, and then 
    to make sure these game states are nonequivalent by reading a word which leads to acceptance
    only if the game state is the wanted one (apart from some technical details).
    
    To do this we split the automaton into several components.
    First we present a ``subset factory gadget''
    that allows to make any
    desired 
    set of the form $\{ S\}$ where $S$ is a subset of a specific length-$n$ path in the automaton. This gadget will be used several times to accumulate subsets in the game state.
    Then we present a ``testing gadget'' allowing to distinguish between a doubly exponential number of game states.
    %Finally in the final assemble where the subset factory
    %is used several times to accumulate subsets in the game state,
    %whose game states can be distinguished using the ``testing gadget''.

    The construction of $W(\mathcal{A})$ in Definition~\ref{winningSetAuto} shows that the labels of the transitions
    are not important with regard to the winning set language that is obtained from it.
    In this section we define automata by describing their graphs, and a node with two outgoing transitions can have them arbitrary labeled by $0$ and $1$.

\begin{lemma}[Subset factory gadget]
    \label{lem:subsetFactoryGadget}
    Let $\GenSubset_n$ be the graph in Figure~\ref{subsetFactoryGadget}.
    For $i \in \{1, \ldots, n\}$, denote $o_i = e_{2n + i -1}$ (successors of the $c_i$).
    For all $S \in 2^{\{1,\ldots,n\}}$ there exists $w ^\mathrm{subset}_S \in \{A,B\}^{2n} $ such that
    $\delta_W ( \{\{b_1\}\}, w ^\mathrm{subset}_S) )  \sim  \{ \{ o_i \;|\; i \in S \}   \} $ for every DFA over $\{0,1\}$ that contains $\GenSubset_n$ as a subgraph.

    \begin{figure}[htp]
        \begin{center}
            \begin{tikzpicture}
              [nonaccept/.style={circle,draw,inner sep=0cm,minimum size=0.75cm},
              accept/.style={circle,draw,double,inner sep=0cm,minimum size=0.75cm},
	 every edge/.style={draw,->,>=stealth},
	 scale=0.42]
		\node [style=nonaccept] (0) at (-4, 2) {$b_1$};
		\node [style=nonaccept] (1) at (-1, 2) {$d_1$};
		\node [style=nonaccept] (2) at (5.75, 2) {$b_{n-1}$};
		\node [style=nonaccept] (3) at (9, 2) {$d_{n-1}$};
		\node [style=nonaccept] (4) at (-4, -1) {$c_1$};
		\node [style=nonaccept] (5) at (-6.25, -1) {$s_1$};
		\node [style=nonaccept] (6) at (3.25, -1) {$s_{n-1}$};
		\node [style=nonaccept] (7) at (5.75, -1) {$c_{n-1}$};
		\node [style=nonaccept] (8) at (11.25, -1) {$s_n$};
		\node [style=nonaccept] (9) at (13.5, 2) {$b_n$};
		\node [style=nonaccept] (10) at (13.5, -1) {$c_n$};
		\node [style=nonaccept] (11) at (13.5, -4) {$e_{3n - 1}$};
		\node [style=nonaccept] (12) at (5.75, -4) {};
		\node [style=nonaccept] (13) at (8.5, -4) {};
		\node [style=nonaccept] (14) at (11.25, -4) {};
		\node [style=nonaccept] (15) at (0.75, -4) {};
		\node [style=nonaccept] (16) at (-1.5, -4) {$e_2$};
		\node [style=nonaccept] (17) at (-4, -4) {$e_1$};
		\node  (18) at (3, 2) {$\phantom{I}$};
		\node  (19) at (0.75, 2) {$\phantom{I}$};
		\node  (20) at (2.5, -4) {$\phantom{I}$};
		\node  (21) at (4, -4) {$\phantom{I}$};
		\node  (22) at (2, 2) {$\cdots$};
		\node  (23) at (3.25, -4) {$\cdots$};
		\node  (24) at (-6, 2) {};
		\node [style=accept,inner sep=0cm,minimum size=0.75cm] (25) at (17.5, 2) {$b_{n+1}$};
		\node  (26) at (17.5, -4) {};
		\node  (27) at (17.5, -4) {$\phantom{I}$};
		
		\draw (0) edge (1);
		\draw (2) edge (3);
		\draw (0) edge (4);
		\draw (4) edge (5);
		\draw [in=130, out=-130, loop] (5) edge (5);
		\draw [in=145, out=-145, loop, looseness=5] (5) edge (5);
		\draw (7) edge (6);
		\draw [in=130, out=-130, loop] (6) edge (6);
		\draw [in=145, out=-145, loop, looseness=5] (6) edge (6);
		\draw (2) edge (7);
		\draw [out=10,in=170] (3) edge (9);
		\draw [out=-10,in=-170] (3) edge (9);
		\draw (9) edge (10);
		\draw (10) edge (8);
		\draw [in=130, out=-130, loop] (8) edge (8);
		\draw [in=145, out=-145, loop, looseness=5] (8) edge (8);
		\draw (7) edge (12);
		\draw [in=170, out=10] (12) edge (13);
		\draw [in=-170, out=-10] (12) edge (13);
		\draw (10) edge (11);
		\draw [in=170, out=10] (13) edge (14);
		\draw [in=-170, out=-10] (13)edge (14);
		\draw [in=170, out=10] (14) edge (11);
		\draw [in=-170, out=-10] (14) edge (11);
		\draw (4) edge (17);
		\draw [in=170, out=10] (17) edge (16);
		\draw [in=-170, out=-10] (17) edge (16);
		\draw [in=170, out=10] (16) edge (15);
		\draw [in=-170, out=-10] (16) edge (15);
		\draw [in=170, out=10] (1) edge (19);
		\draw [in=-170, out=-10] (1) edge (19);
		\draw [in=170, out=10] (18) edge (2);
		\draw [in=-170, out=-10] (18) edge (2);
		\draw [in=170, out=10] (15) edge (20);
		\draw [in=-170, out=-10] (15) edge (20);
		\draw [in=170, out=10] (21) edge (12);
		\draw [in=-170, out=-10] (21) edge (12);
		\draw (24) edge (0);
		\draw (9) edge (25);
		\draw [in=135, out=45, loop] (25) edge (25);
		\draw [in=120, out=60, loop, looseness=5] (25) edge (25);
		\draw [in=170, out=10] (11) edge (27);
		\draw [in=-170, out=-10] (11) edge (27);
\end{tikzpicture}
        \end{center}
        \caption{\label{subsetFactoryGadget} $\GenSubset_n$, the subset factory gadget.}
     \end{figure}

\end{lemma}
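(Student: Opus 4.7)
The plan is to build $w^{\mathrm{subset}}_S$ as a length-$2n$ concatenation $u_1 u_2 \cdots u_n$, one two-letter block per index, with $u_i = BA$ when $i \in S$ and $u_i = AB$ otherwise. I would carry the following invariant through the induction on $i$: after reading $u_1 \cdots u_i$ starting from $\{\{b_1\}\}$, the game state is equivalent to a singleton $\{\{b_{i+1}\} \cup E_i\}$, where $E_i$ collects nodes on the $e$-chain placed at positions from which the remaining $2(n-i)$ symbols deterministically advance them (both transitions at every $e_k$ go to $e_{k+1}$) to exactly $\{o_j \;|\; j \in S,\, j \le i\}$. The base case $i=0$ is immediate: the state $\{\{b_1\}\}$ matches the invariant with $E_0 = \emptyset$.

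For the induction step, suppose the current game state is $\{\{b_i\} \cup X\}$. If $i \in S$, reading $B$ replaces $b_i$ by both of its successors, yielding $\{\{d_i, c_i\} \cup X'\}$ (or $\{\{b_{n+1}, c_n\} \cup X'\}$ in the last block), where $X'$ is the one-step advancement of $X$; the subsequent $A$ is a nontrivial choice only at $c_i$ since $d_i$, all $e$-chain nodes, and the sink $b_{n+1}$ have both transitions collapsing to a single next state. This produces two candidate sets, one containing the trap sink $s_i$ and one containing the chain entry point assigned to $c_i$. By Lemma~\ref{basicProperties}\ref{removeStateEquiv}, the set containing $s_i$ is discarded (since $s_i$ is a non-accepting sink in any DFA extension of the gadget), leaving a singleton in which the entry point has joined the accumulator. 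If $i \notin S$, reading $A$ instead splits the state into two sets, one containing $d_i$ and one containing $c_i$ (Alice's two moves at $b_i$); then reading $B$ advances both, the $d_i$-branch cleanly to $b_{i+1}$ and the $c_i$-branch into $\{s_i, \text{entry}_i\}$, which is again killed by Lemma~\ref{basicProperties}\ref{removeStateEquiv}. Either way the invariant is preserved, with $E_i = E_{i-1} \cup \{\text{entry}_i\}$ advanced by one, or $E_i = E_{i-1}$ advanced, respectively.

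After $n$ blocks, the game state is $\{\{b_{n+1}\} \cup E_n\}$ with $E_n = \{o_j \;|\; j \in S\}$. Since $b_{n+1}$ is a final sink, Lemma~\ref{basicProperties}\ref{finalSinkEquiv} allows me to drop it, yielding the desired $\{\{o_j \;|\; j \in S\}\}$. The main obstacle I foresee is the arithmetic bookkeeping: one must choose the position at which each $c_i$ enters the $e$-chain so that, after the $2(n-i)$ deterministic advancement steps that follow its insertion, it lands exactly at $o_i = e_{2n+i-1}$. The special last block (where $b_n$ branches directly to the accepting sink $b_{n+1}$ rather than through a $d_n$) and the verification that the accumulated set never accidentally contains a dead state along the way both require minor case analyses, but do not alter the structure of the induction.
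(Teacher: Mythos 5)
Your proposal is correct and follows essentially the same route as the paper's proof: the same word ($BA$ for $i \in S$, $AB$ otherwise), the same block-by-block invariant $\{\{b_{i+1}\} \cup E_i\}$ maintained by discarding the branch containing the non-accepting sink $s_i$ via Lemma~\ref{basicProperties}\ref{removeStateEquiv}, and the same final application of Lemma~\ref{basicProperties}\ref{finalSinkEquiv} to drop $b_{n+1}$. The only difference is that the paper carries out explicitly the index bookkeeping you defer (tracking each token as $e_{2i-4+j}$ along the chain), which is exactly the "arithmetic" you flag as remaining.
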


\begin{proof}
  Denote $f_i = e_{3 i - 2} $.
  For $i \in \{1, \ldots, n\}$ and $S \subseteq \{1, \ldots, i-1\}$, denote $S_i = \{e_{2 i - 4 + j} \;|\; j \in S\}$.
  Consider the game state $\gst{R}(i,S) = \{\{b_i\} \cup S_i\}$.
  If the automaton reads $A B$, the resulting game state is
  \[
    \delta_W(\gst{R}(i,S), A B) = \{\{s_i, e_{3 i}\} \cup S_{i+1}, \{b_{i+1}\} \cup S_{i+1}\} \sim \gst{R}(i+1, S) .
  \]
  In the case of $B A$ we instead have
  \[
    \delta_W(\gst{R}(i,S), B A) = \{ \{b_{i+1}, s_i\} \cup S_{i+1}, \{b_{i+1}, f_i\} \cup S_{i+1} \} \sim \gst{R}(i+1, S \cup \{i\}) .
  \]
  In both cases the final steps follow from Lemma~\ref{basicProperties}\ref{removeStateEquiv}.
  By Lemma~\ref{basicProperties}\ref{finalSinkEquiv} we also have $\gst{R}(n+1, S) \sim \{S_{n+1}\}$ since $b_{n+1}$ is an accepting sink state.
  
  Take $w^\mathrm{subset}_S$ as the concatenation $w_1 w_2 \dots w_n$ where $w_i = BA$ if $i \in S$, and $w_i = AB$ if $i \notin S$.
  This word satisfies the claim, since $\delta_W(\{\{b_1\}\}, w^\mathrm{subset}_S) = \{S_{n+1}\} = \{\{o_i \;|\; i \in S\}\}$.
\end{proof}

\begin{lemma}[Game state factory gadget]\label{subsetFactory}
    Let $\GenState_n$ be the graph in figure~\ref{stateFactoryGadget} and $\mathcal{A}$ any DFA over $\{0,1\}$ that contains it.
    % TODO move in next propo Let $T$ a graph such that $\delta_W (\{\{t_1\}\}, A^{2n + 1}) \sim  \emptyset $
    %For all $H \in 2^{2^{r_1, \dots, r_n}}$,
    For all $\gst{S} = \{ S_1, \ldots, S_\ell \}$ where each $S_i \subseteq \{r_1,\ldots, r_n\}$,
    there exists $w^\mathrm{gen}_{\gst{S}} \in \{A,B\}^{\ell (3 n + 1)}$, and a game state $\gst{S}'$ that does not contain a subset of the states of $\GenState_n$, such that
    $\delta_W (\{\{a_1\}\},w^\mathrm{gen}_{\gst{S}})  \sim \gst{S} \cup \{\{a_1\}\} \cup \gst{S}'$.

        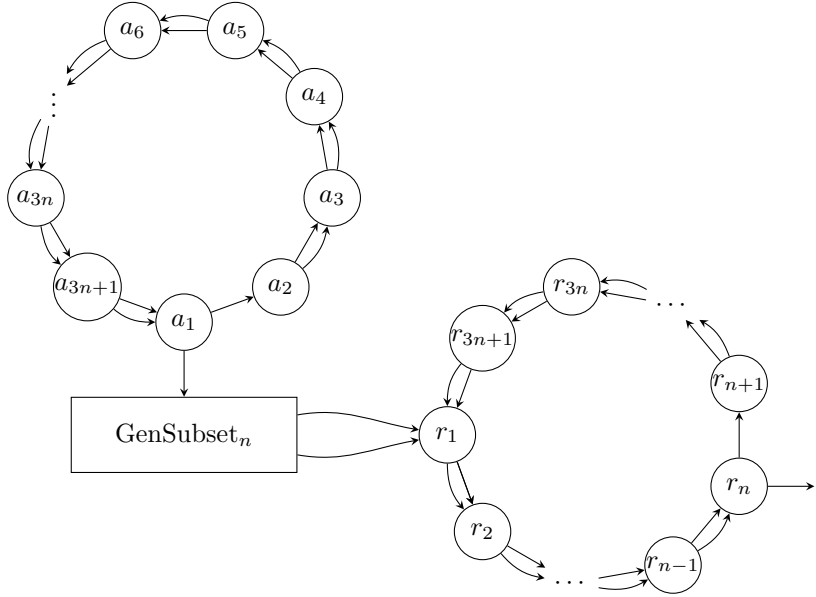
\begin{figure}[htp]
          \begin{center}
              \begin{tikzpicture}
	[nonaccept/.style={circle,draw,inner sep=0cm,minimum size=0.75cm},
	 accept/.style={circle,draw,double,inner sep=0cm,minimum size=0.75cm},
	 every edge/.style={draw,->,>=stealth},
	 scale=0.5]

	\node [nonaccept] (circ1) at (270:4cm) {$a_1$};
	\node [nonaccept] (circ2) at (310:4cm) {$a_2$};
	\node [nonaccept] (circ3) at (350:4cm) {$a_3$};
	\node [nonaccept] (circ4) at (30:4cm) {$a_4$};
	\node [nonaccept] (circ5) at (70:4cm) {$a_5$};
	\node [nonaccept] (circ6) at (110:4cm) {$a_6$};
	\node (circ7) at (150:4cm) {$\vdots$};
	\node [nonaccept] (circ8) at (190:4cm) {$a_{3n}$};
	\node [nonaccept] (circ9) at (230:4cm) {$a_{3n+1}$};

	\draw (circ1) edge (circ2);
	\foreach \i/\j/\ang in {2/3/40, 3/4/80, 4/5/120, 5/6/160, 6/7/200, 7/8/240, 8/9/280, 9/1/320}{
		 \draw (circ\i) edge (circ\j);
		 \draw [in=220+\ang, out=\ang] (circ\i) edge (circ\j);
	}

	\node [draw, rectangle, minimum height=1cm, minimum width=3cm] (gensubset) at (0,-7) {$\GenSubset_n$};
	\draw (circ1) edge (gensubset);

	\begin{scope}[shift={(11cm,-7cm)}]
	\node [nonaccept] (circ21) at (180:4cm) {$r_1$};
	\node [nonaccept] (circ22) at (220:4cm) {$r_2$};
	\node (circ23) at (260:4cm) {$\cdots$};
	\node [nonaccept] (circ24) at (300:4cm) {$r_{n-1}$};
	\node [nonaccept] (circ25) at (340:4cm) {$r_n$};
	\node [nonaccept] (circ26) at (20:4cm) {$r_{n+1}$};
	\node (circ27) at (60:4cm) {$\cdots$};
	\node [nonaccept] (circ28) at (100:4cm) {$r_{3n}$};
	\node [nonaccept] (circ29) at (140:4cm) {$r_{3n+1}$};
	\end{scope}

	\draw [in=170,out=10] (gensubset) edge (circ21);
	\draw [in=-170,out=-10] (gensubset) edge (circ21);
	
	\draw (circ21) edge (circ22);
	\foreach \i/\j/\ang in {1/2/0, 2/3/40, 3/4/80, 4/5/120, 6/7/200, 7/8/240, 8/9/280, 9/1/320}{
		 \draw (circ2\i) edge (circ2\j);
		 \draw [in=130+\ang, out=270+\ang] (circ2\i) edge (circ2\j);
	}
	\draw (circ25) edge (circ26);

	%\node [draw, rectangle, minimum height=1cm, minimum width=3cm] (T) at (0,-11) {$T$};
	\node [right=1cm] (T) at (circ25) {};
	\draw (circ25) edge (T);

\end{tikzpicture}
        \end{center}
      
        \caption{\label{stateFactoryGadget} $\GenState_n$, the game state factory gadget.}
     \end{figure}

\end{lemma}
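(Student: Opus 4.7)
The plan is by induction on $\ell = |\gst{S}|$. For the base case $\ell = 0$, the empty word and $\gst{S}' = \emptyset$ satisfy the claim. For the inductive step, writing $\gst{S} = \{S_1, \ldots, S_\ell\}$, I would take $w^\mathrm{gen}_{\gst{S}} = w^\mathrm{gen}_{\{S_1, \ldots, S_{\ell-1}\}} \cdot u_{S_\ell}$, where each ``round'' $u_S \in \{A,B\}^{3n+1}$ is a word designed to append one new subset $\{r_i : i \in S\}$ to the game state while preserving what was already built.

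The round $u_S$ has the shape $A \cdot w^\mathrm{subset}_S \cdot v_S$, where $w^\mathrm{subset}_S$ is the word from Lemma~\ref{lem:subsetFactoryGadget} and $v_S$ is a specifically chosen word of length $n$. The key single-round computation, traced from $\{\{a_1\}\}$, proceeds as follows. The initial $A$ lets Alice's existential choice split $\{a_1\}$ along its two outgoing edges, giving $\{\{a_2\}, \{b_1\}\}$. Reading $w^\mathrm{subset}_S$ cleanly shifts $\{a_2\}$ to $\{a_{2n+2}\}$ along the $a$-cycle (whose intermediate nodes have both outgoing edges heading to the same successor) and, by Lemma~\ref{lem:subsetFactoryGadget}, transforms $\{b_1\}$ into $\{\{o_i : i \in S\}\}$. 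The final $v_S$ brings $\{a_{2n+2}\}$ back to $\{a_1\}$ via the remaining $n$ steps around the $a$-cycle of length $3n+1$, and propagates each $o_i = e_{2n+i-1}$ through the rest of the $e$-chain into the $r$-cycle: $o_i$ reaches the exit $r_1$ after $n - i + 1$ steps and then rides the cycle $i - 1$ more positions, landing at $r_i$. This yields the single-round identity $\delta_W(\{\{a_1\}\}, u_S) \sim \{\{a_1\}, \{r_i : i \in S\}\} \cup \gst{J}_S$ for some $\gst{J}_S$ whose elements all contain a state outside $\GenState_n$.

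For the inductive step proper, Lemma~\ref{basicProperties}\ref{unionEquiv} reduces the problem to handling each set in the current game state independently under $u_{S_\ell}$: the $\{a_1\}$ component reproduces itself along with the new $\{r_i : i \in S_\ell\}$ (plus outside-gadget junk) by the single-round identity; each set already in $\gst{S}''$ stays outside $\GenState_n$, using the hypothesis that the ambient DFA does not re-enter $\GenState_n$; and each previously built subset $R \subseteq \{r_1, \ldots, r_n\}$ must be shown to return to itself modulo outside-gadget junk. The last point is the main obstacle. The $r$-cycle has length exactly $3n+1$, so a pollution-free shift would bring any $R$ back to itself, but $r_n$ has one outgoing edge exiting the gadget, so whenever the shifted image of $R$ contains $r_n$ the transition can introduce an outside state. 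The argument I would make exploits the fact that $A$-letters let Alice select the clean edge out of $r_n$, so the intact shift survives as one branch of the nondeterminism; combining this with Lemma~\ref{basicProperties}\ref{unionInsideEquiv} and a careful accounting of which shifts of $R$ meet $r_n$ at which letters of $u_{S_\ell}$, the game state acquires $R$ itself (or a subset equivalent to it under the equivalence lemmas) together with polluted supersets containing outside states. Lemmas~\ref{basicProperties}\ref{subsetEquiv}, \ref{basicProperties}\ref{removeStateEquiv}, and~\ref{lemmasEquivIncl}\ref{equivChainDoubleIncl} would then be used to collapse these supersets into $\gst{S}'$ and conclude the equivalence.
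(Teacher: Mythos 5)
Your overall plan is the same as the paper's: induct on the number of sets, use a round word of the shape $A\, w^\mathrm{subset}_{S}\, (\text{length-}n\text{ suffix})$ (the paper takes the suffix to be $A^n$), and analyse the three kinds of components ($\{a_1\}$, the previously built subsets of $\{r_1,\dots,r_n\}$, and off-gadget junk) separately via Lemma~\ref{basicProperties}\ref{unionEquiv}. The base case, the split $\{a_1\} \mapsto \{\{a_2\},\{b_1\}\}$ under the initial $A$, the timing around the $a$-cycle, and the propagation of each $o_i$ to $r_i$ in exactly $n$ steps all check out and agree with the paper.

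The gap is in the step you yourself flag as the main obstacle: showing that a previously built set $R = \{r_i : i \in S_j\}$ survives a later round intact. Your mechanism --- Alice keeps the clean edge out of $r_n$ alive as one branch of the nondeterminism --- only applies when the letter read at the critical moment is an $A$. But the element that starts the round at $r_i$ sits at $r_n$ exactly when letter number $n-i$ of the round is read, and for $1 \le i \le n-1$ this letter lies inside $w^\mathrm{subset}_{S_\ell}$, which is a concatenation of blocks $AB$ and $BA$ and hence is a $B$ at roughly half of these positions (which half depends on $S_\ell$). A $B$-transition produces a single successor set containing \emph{both} successors of $r_n$, so the off-gadget state is forced into the set, the clean $R$ is simply not an element of the resulting game state, and it cannot be recovered by Lemma~\ref{basicProperties}\ref{subsetEquiv}, \ref{basicProperties}\ref{removeStateEquiv} or Lemma~\ref{lemmasEquivIncl}\ref{equivChainDoubleIncl}: those only let you discard supersets or whole sets, never shrink a set that is a proper superset of the one you want. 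To be fair, the paper's own proof dispatches this point with the single observation that the round length equals the length of the $r$-cycle, without addressing the exit edge at $r_n$, so you have put your finger on a genuine weak spot; but the resolution you sketch does not close it, and ``a careful accounting of which shifts of $R$ meet $r_n$ at which letters'' is exactly where a correct argument would have to do real work (for instance by rerouting the exit so that the critical letters fall inside the $A^n$ suffix, or by otherwise modifying the round word).
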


\begin{proof}
    The idea is that previously made subsets will rotate in the rightmost cycle.
    Meanwhile, a singleton set will rotate in the left cycle, initiating from the state $a_1$ the creation of a new subset by reading the letter $A$.
    This new set is created in the subset factory component and joins the previously made sets in the rightmost cycle.
    
    %Suppose the game state \[ V =\{ \{a_2\}, \{b_1\},\{r_{i+1} : i \in S_1\} \cup H_T, 
    Suppose we have reached a game state of the form $\gst{R} = \{ \{a_1\},\{r_i \;|\; i \in S_1\} , 
    \ldots,  \{r_i \;|\; i \in S_k\}\} \cup \gst{S}'$ where $\gst{S}'$ does not contain any subset of $\GenState_n$.
    We prove that by reading $A w^\mathrm{subset}_{S_{k+1}} A^n$,
    we reach a game state of the form $\{ \{a_1\},\{r_i \;|\; i \in S_1\} , 
    \ldots,  \{r_i \;|\; i \in S_{k+1}\}\} \cup \gst{S}''$.
    We analyze the elements of $\gst{R}$ separately.
    \begin{itemize}
        \item Because $|A w^{subset}_{S_{k+1}}A^{n}| = 3n+1$
        is the size of the rightmost cycle, we have $\delta_W (\{ \{r_i \;|\; i \in S_j\} \}, A w^{subset}_{S_{k+1}} A^{n}) \sim \{ \{r_i : i \in S_j\}\}$ for each $j \leq k$.
            
      \item The game state $\{\{a_1\}\}$ first evolves into $\delta_W(\{\{a_1\}\}, A) = \{ \{a_2\},\{b_1\}\}$.
        The component $\{\{a_2\}\}$ becomes $\{\{a_1\}\}$ when we read $w^{subset}_{S_{k+1}}A^{n}$.
        As for $\{\{b_1\}\}$, Lemma~\ref{subsetFactory} gives $\delta_W(\{\{b_1\}\}, w^{subset}_{S_{k+1}} )  = 
        \{ \{ o_i \;|\; i\in S_{k+1}\}\}$,
       and then
       $\delta_W(\{ \{ o_i \;|\; i\in S_{k+1}\} \}, A^n ) = 
       \{ \{ r_{i} \;|\; i \in S_{k+1}\} \} \cup \gst{S}''$
       where every set in $\gst{S}''$ contains a state outside of $\GenState_n$.
     \item
       The game state $\gst{S}'$ evolves into some $\gst{S}'''$ each of whose sets contains a state not in $\GenState_n$, since the gadget cannot be re-entered.
    \end{itemize}
    
    By Lemma~\ref{basicProperties}\ref{unionEquiv} we have 
    $\delta_W (\gst{R}, A w^\mathrm{subset}_{S_{k+1}} A^{n})
    \sim \{ \{a_1\},\{r_{i} \;|\; i \in S_1\}, 
    \ldots,  \{r_{i} \;|\; i \in S_{k+1}\}\} \cup \gst{S}'' \cup \gst{S}'''$.
    We obtain $w^\mathrm{gen}_{\gst{S}}$ as the concatenation of these words.
  \end{proof}

\begin{lemma}[Testing gadget] \label{lem:testingGadget}
    Let $\Testing_n$ be the graph in Figure~\ref{measuringGadget}.
    \begin{roster}
        \item For all $P \subseteq \{1, \dots, n\}$
        there exists $w ^\mathrm{test}_P \in \{A,B\}^{n} $ such that
        for each $I \subseteq \{1, \dots, n\}$, the game state $\delta_W (  \{\{ q_i \;|\; i \in I\}\}, w^\mathrm{test}_P)$ is accepting iff $I \subseteq P$.
        \item Let $V$ be the set of nodes of the graph $\Testing_n$.
                Then for all $\gst{S} \in 2^{2^{V}}$ and $w\in \{A, B\}^{\geq 2n}$, the game state $\delta_W (\gst{S},w)$ is not accepting.
    \end{roster}
    
    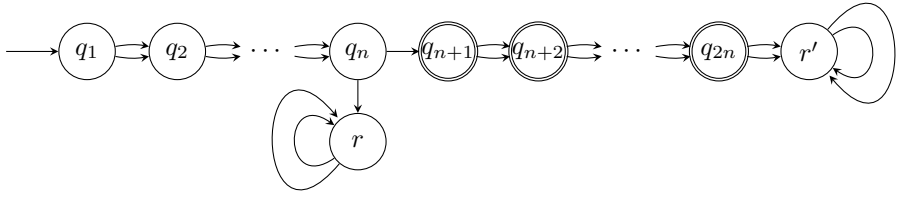
\begin{figure}[htp]
       \begin{center}
        \begin{tikzpicture}
	[nonaccept/.style={circle,draw,inner sep=0cm,minimum size=0.75cm},
	 accept/.style={circle,draw,double,inner sep=0cm,minimum size=0.75cm},
	 every edge/.style={draw,->,>=stealth},
	 scale=0.4]

	 \node (x) at (-3,0) {};
	 \node [nonaccept] (q1) at (0,0) {$q_1$};
 	 \node [nonaccept] (q2) at (3,0) {$q_2$};
 	 \node (q3) at (6,0) {$\cdots$};
 	 \node [nonaccept] (q4) at (9,0) {$q_n$};

	 \node [nonaccept] (r) at (9,-3) {$r$};

	 \node [accept] (a1) at (12,0) {$q_{n+1}$};
	 \node [accept] (a2) at (15,0) {$q_{n+2}$};
	 \node (a3) at (18,0) {$\cdots$};
	 \node [accept] (a4) at (21,0) {$q_{2 n}$};

	 \node [nonaccept] (r2) at (24,0) {$r'$};

	 \draw (x) edge (q1);

	 \foreach \i/\j in {q1/q2,q2/q3,q3/q4,a1/a2,a2/a3,a3/a4,a4/r2}{
	 	  \draw [in=170, out=10] (\i) edge (\j);
		  \draw [in=-170, out=-10] (\i) edge (\j);
	 }
	 \draw (q4) edge (r);
	 \draw (q4) edge (a1);

	\draw [in=130, out=-130, loop] (r) edge (r);
	\draw [in=145, out=-145, loop, looseness=5] (r) edge (r);

	\draw [in=-50, out=50, loop] (r2) edge (r2);
	\draw [in=-35, out=35, loop, looseness=5] (r2) edge (r2);

\end{tikzpicture}
       \end{center}

        \caption{\label{measuringGadget} $\Testing_n$, the testing gadget.}
     \end{figure}

   \end{lemma}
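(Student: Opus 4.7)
My plan is to define
\[
 w^\mathrm{test}_P = c_1 c_2 \cdots c_n, \qquad c_j = \begin{cases} A & \text{if } n-j+1 \in P, \\ B & \text{otherwise.} \end{cases}
\]
The driving observation is that in $\Testing_n$, every transition on $\{q_1,\dots,q_{n-1}\}$ is deterministic (both $0$ and $1$ go to the next state), so a descendant of $q_i$ marches through the chain and arrives at $q_n$ exactly before the letter $c_{n-i+1}$ is read. Only at $q_n$ is there a branching choice between the good successor $q_{n+1}$ and the bad sink $r$. Thus $c_{n-i+1}$ is the single letter that decides the fate of the $q_i$-descendant.

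For part (a), ``$I \subseteq P$ implies accepting'', I would build one specific run witnessing acceptance. Set $S_0 = \{q_i \mid i \in I\}$. At step $k$: if $c_k = A$, pick Alice's function $f_k$ so that the unique state of $S_{k-1}$ at $q_n$ (if any) is sent to $q_{n+1}$; if $c_k = B$, use $I \subseteq P$ to conclude that the only index which would place a descendant at $q_n$ at this time, namely $n - k + 1$, is not in $I$, so $S_{k-1}$ contains no state at $q_n$ and the $B$-step does not introduce $r$. A routine induction yields $S_n = \{q_{n+i} \mid i \in I\} \subseteq F$, so $\delta_W(\{\{q_i \mid i \in I\}\}, w^\mathrm{test}_P)$ is accepting.

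For the converse, fix $i \in I \setminus P$, so that $c_{n-i+1} = B$. I would show by induction on $k \leq n-i$ that every set in $\delta_W(\{\{q_j \mid j \in I\}\}, c_1\cdots c_k)$ contains the descendant of $q_i$, which by the determinism of the initial segment is precisely $q_{i+k}$; in particular, before reading $c_{n-i+1}$, every set contains $q_n$. The $B$-step then forces both $\delta(q_n, 0)$ and $\delta(q_n, 1)$ into every set, so $r$ enters every set. Because $r$ is a non-accepting sink, it remains in every set for the rest of the run, and no set is ever a subset of $F$.

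For part (b), I would just inspect $\Testing_n$: its node set $V$ is closed under transitions, and the only cycles are the self-loops at $r$ and $r'$, so every directed path of length at least $2n$ starting from a node of $V$ terminates at $r$ or at $r'$. Using the definition of $\delta_W$, every element of every set in $\delta_W(\gst{S}, w)$ is reached from some element of some set in $\gst{S}$ by a path of length $|w| \geq 2n$ in $\mathcal{A}$, hence lies in $\{r, r'\}$. Since neither $r$ nor $r'$ is in $F$, no set in $\delta_W(\gst{S}, w)$ is a subset of $F$, and the game state is non-accepting. The one spot to be careful is the inductive claim in part (a)'s ``only if'' direction, where I must verify that across all nondeterministic branches the $q_i$-descendant is preserved; this is precisely where the determinism of the left chain is essential.
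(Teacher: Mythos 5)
Your proof is correct and takes essentially the same approach as the paper's: the identical choice of $w^\mathrm{test}_P$, the same observation that the only nondeterministic branching occurs at $q_n$ so that the letter $c_{n-i+1}$ alone decides the fate of the $q_i$-descendant (sending it to $q_{n+1}$ on $A$ and forcing the dead sink $r$ into every set on $B$), and the same path-length argument for part (b). The paper merely packages your per-step descendant tracking as the one-step recurrences $\delta_W(\{S_I\},A) \sim \{S_J\}$ and $\delta_W(\{S_I\},B) \in \{\emptyset, \{S_J\}\}$.
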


\begin{proof}
    \begin{roster}
        \item
          For $I \subseteq \{1, \ldots, 2 n\}$, denote $S_I = \{ q_i \;|\; i \in I \}$.
          If $2 n \notin I$, let $J = \{ i+1 \;|\; i \in I \}$.
          A simple case analysis together with Lemma~\ref{basicProperties}\ref{removeStateEquiv} and~\ref{finalSinkEquiv} shows that $\delta_W(\{S_I\}, A) \sim \{S_J\}$ and
          \[
            \delta_W(\{S_I\}, B) =
            \begin{cases}
              \emptyset & \mbox{if } n \in I, \\
              \{S_J\} & \mbox{otherwise.}
            \end{cases}
          \]
          Take
          \[
            w^\mathrm{test}_P[i] =
            \begin{cases}
              A &\mbox{if } n-i+1 \in P, \\
              B &\mbox{otherwise. }
            \end{cases}
          \]
          Then $\gst{S} = \delta_W (\{S_I\}, w^\mathrm{test}_P)$ is accepting if and only if $\gst{S} \sim \{\{q_{i+n} \;|\; i \in I\}\}$.
          This is equivalent to $w[n-i+1] = A$ for all $i \in I$, i.e. $I \subseteq P$.
        \item If $\gst{S} \in 2^{2^{V}}$ and $w$ with $|w| \geq 2n$,
          then every $S \in \delta_W (\gst{S},w)$ satisfies $S \subseteq \{r,r'\}$.
        \end{roster}
  \end{proof}

\begin{theorem}
    For each $n > 0$ there exists a DFA
    $\mathcal{A}_n$ over $\{0,1\}$ with $15n + 3$ states such that
    the minimal DFA for $W(\mathcal{L}(\mathcal{A}_n))$ has a least $D(n)$ states.
  \end{theorem}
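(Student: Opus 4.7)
The plan is to take $\mathcal{A}_n$ to be the automaton obtained by attaching $\Testing_n$ to $\GenState_n$: the initial state is $a_1$, the outgoing edge of $r_n$ (the ``$T$'' node in Figure~\ref{stateFactoryGadget}) is redirected into $q_1$ of $\Testing_n$, and the accepting states are $b_{n+1}$ together with $q_{n+1},\ldots,q_{2n}$. A direct count from Figures~\ref{subsetFactoryGadget}--\ref{measuringGadget} gives $(7n-1)+(3n+1)+(3n+1)+(2n+2)=15n+3$ states.

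For each antichain $\mathcal{F} = \{S_1,\ldots,S_\ell\}$ of $2^{\{1,\ldots,n\}}$, I set $\gst{S}_{\mathcal{F}} = \{\{r_i \;|\; i \in S_j\} \;|\; 1 \le j \le \ell\}$ and form the word $u_{\mathcal{F}} = w^{\mathrm{gen}}_{\gst{S}_{\mathcal{F}}} \cdot t$, where $w^{\mathrm{gen}}_{\gst{S}_{\mathcal{F}}}$ is provided by Lemma~\ref{subsetFactory} and $t$ is a short ``transfer'' block that slides each set $\{r_i \;|\; i \in S_j\}$ around the right cycle and across the edge $r_n \to q_1$ into the corresponding set $\{q_i \;|\; i \in S_j\}$ inside $\Testing_n$. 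Reading $u_{\mathcal{F}}$ from the initial game state $\{\{a_1\}\}$ then reaches a game state equivalent to $\{\{q_i \;|\; i \in S_j\} \;|\; 1 \le j \le \ell\} \cup \gst{J}_{\mathcal{F}}$ for some ``junk'' part $\gst{J}_{\mathcal{F}}$.

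To distinguish these $D(n)$ words pairwise modulo $\sim_{W(\mathcal{L}(\mathcal{A}_n))}$, I use the tests $w^{\mathrm{test}}_P$ from Lemma~\ref{lem:testingGadget}. By Lemma~\ref{basicProperties}\ref{unionEquiv} and Lemma~\ref{lem:testingGadget}(a), the $\{\{q_i \;|\; i \in S_j\}\}$-parts of the game state accept $w^{\mathrm{test}}_P$ iff some $S_j \subseteq P$. Antichains of $2^{\{1,\ldots,n\}}$ biject with upper sets via $\mathcal{F} \mapsto \{P \;|\; \exists S \in \mathcal{F},\, S \subseteq P\}$, so any two distinct antichains $\mathcal{F} \neq \mathcal{F}'$ are separated by some $P$; then $u_{\mathcal{F}} w^{\mathrm{test}}_P$ and $u_{\mathcal{F}'} w^{\mathrm{test}}_P$ differ in membership in $W(\mathcal{L}(\mathcal{A}_n))$. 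The Myhill--Nerode theorem then yields a minimal DFA of at least $D(n)$ states.

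The main obstacle is ruling out spurious acceptance from $\gst{J}_{\mathcal{F}}$. Every set in the $\gst{S}'$ produced by Lemma~\ref{subsetFactory} contains a state outside $\GenState_n$, which in $\mathcal{A}_n$ must be a state of $\Testing_n$; by choosing $|t|$ large enough, these sets spend at least $2n$ symbols inside $\Testing_n$ by the end of $w^{\mathrm{test}}_P$, so Lemma~\ref{lem:testingGadget}(b) renders them non-accepting. The residual singleton $\{\{a_1\}\}$ evolves along the $(3n+1)$-periodic left cycle of $\GenState_n$ (and its offshoot into $\GenSubset_n$ via $b_1$), and has to be shown unable to yield a subset of $F$ in the remaining $|t|+n$ symbols; after this, Lemma~\ref{basicProperties}\ref{subsetEquiv}--\ref{finalSinkEquiv} absorb any remaining supersets. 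This trajectory analysis is where the proof's complexity lies; the counting and separation arguments are comparatively routine.
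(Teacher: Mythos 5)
Your construction and argument are the same as the paper's: the same automaton $\mathcal{A}_n$ (testing gadget attached to the outgoing arrow of $\GenState_n$, initial state $a_1$), the same use of Lemma~\ref{subsetFactory} to reach a game state encoding an arbitrary antichain over $\{r_1,\dots,r_n\}$, the same transfer into $\Testing_n$ followed by $w^{\mathrm{test}}_P$, and the same antichain/up-set separation via Myhill--Nerode. The only structural difference is that the paper keeps the transfer block $A^{n+1}$ in the distinguishing \emph{suffix} (so the reached states are the $X_{\gst{S}}\sim\{\{a_1\}\}\cup\gst{S}\cup\gst{S}'$ themselves), whereas you fold it into the prefix; this is immaterial for the count.

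The one point worth flagging is the step you explicitly leave open, namely that the residual $\{\{a_1\}\}$ component cannot cause spurious acceptance. In the paper this is not a delicate trajectory analysis but a one-line application of Lemma~\ref{basicProperties}\ref{smallWordEquiv}: the only final states are $b_{n+1}$ and $q_{n+1},\dots,q_{2n}$, and the shortest path from $a_1$ to any of them is longer than the $(2n+1)$-letter suffix $A^{n+1}w^{\mathrm{test}}_P$, so that component can simply be dropped. You should be aware, however, that this length comparison is genuinely load-bearing and tight: $b_{n+1}$ is an \emph{accepting sink} of $\GenSubset_n$ reachable from $a_1$ by roughly $2n$ consecutive existential moves along the top chain $a_1\to b_1\to d_1\to b_2\to\cdots\to b_{n+1}$. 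In your variant, where the transfer $t$ is absorbed into $u_{\mathcal{F}}$, the $a_1$-residue has already advanced $|t|$ steps toward $b_{n+1}$ before $w^{\mathrm{test}}_P$ is read, so you must still verify that $|t|+n$ falls short of that path length (equivalently, the same total-length inequality as in the paper). Until that arithmetic is checked, the separation argument is not complete, because a single set $\{b_{n+1}\}$ in the game state makes every continuation accepting and destroys all the distinctions. Everything else in your proposal (the state count $15n+3$, the handling of $\gst{S}'$ via Lemma~\ref{lem:testingGadget}(b), the bijection between antichains and up-sets) matches the paper and is fine.
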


  Together with Proposition~\ref{prop:upperBound}, this implies that the state complexity of $W$ restricted to regular languages grows doubly exponentially.
  
\begin{proof}
  Let $\mathcal{A}_n$ be the DFA obtained by combining $\Testing_n$ with the outgoing arrow of $\GenState_n$ and assigning $a_1$ as the initial state.
  
  For an antichain $\gst{S}$ on the powerset of $\{r_1, \ldots, r_n\}$, let $X_\gst{S} = \delta_W (\{\{a_1\}\}, w^\mathrm{gen}_\gst{S})$.
  By Lemma~\ref{subsetFactory} we have $X_\gst{S} \sim \{\{a_1\}\} \cup \gst{S} \cup \gst{S}'$ where each set in $\gst{S}'$ contains a state of $\Testing_n$.
  By definition, $X_\gst{S}$ is an accessible state of $W(\mathcal{A})$, and we show that distinct antichains $\gst{S}$ result in nonequivalent states.
  
  Let $P \subseteq \{1, \ldots, n\}$ and consider the game state $X'_\gst{S} = \delta_W(X_\gst{S}, A^{n+1} w^\mathrm{test}_P)$.
  We claim that $X'_\gst{S}$ is accepting iff some element of $\gst{S}$ is a subset of $\{ r_i \;|\; i \in P \}$.
  By Lemma~\ref{basicProperties}\ref{unionEquiv} we may analyze the components of $X_\gst{S}$ separately.
  \begin{itemize}
  \item
    Since the shortest path from $a_1$ to an accepting state has lenght $2 n + 2$ and $|A^{n+1} w^\mathrm{test}_P| = 2 n + 1$, we can ignore it by Lemma~\ref{basicProperties}\ref{smallWordEquiv}.
  \item
    Since each set of $\gst{S}'$ contains a state of $\Testing_n$ and $|A^{n+1} w^\mathrm{test}_P| \geq 2 n$, by Lemma~\ref{lem:testingGadget} the game state $\delta_W(\gst{S}', A^{n+1} w^\mathrm{test}_P)$ is not accepting.
  \item
    The game state $\delta_W(\gst{S}, A^{n+1})$ consists of the sets $\{ q_i \;|\; r_i \in S \}$ for $S \in \gst{S}$, as well as sets that contain at least one element of $\{r_{n+1}, \ldots, r_{2 n}\}$.
    We can ignore the latter by Lemma~\ref{basicProperties}\ref{smallWordEquiv}.
    Lemma~\ref{lem:testingGadget} shows that the former sets produce an accepting game state in $X'_\gst{S}$ iff some $S \in \gst{S}$ is a subset of $\{ r_i \;|\; i \in P \}$.
  \end{itemize}
  The Dedekind number $D(n)$ is the number of antichains on the powerset of $\{1, \ldots, n\}$, so we have found $D(n)$ nonequivalent states in $W(\mathcal{A})$.
\end{proof}

%Since we have ${n\choose \lfloor n/2\rfloor}\le \log_2 D(n)$, the number of states is doubly exponential.

\section{Case of the bounded regular languages}

In this section we prove an upper bound on the complexity of the winning set of a bounded regular language.
Our proof technique is based on tracing the evolution of individual states of a DFA $\mathcal{A}$ in the winning set automaton $W(\mathcal{A})$ when reading several $A$-symbols in a row.

\begin{definition}[Histories of Game States]
  \label{def:History}
  Let $\mathcal{A} = (Q, \{0,1\}, q_0, \delta, F)$ be a DFA.
  Let $\gst{S} \in 2^{2^Q}$ be a game state of $W(\mathcal{A})$, and for each $i \geq 0$, let $\gst{S}_i \sim \delta_W(\gst{S}, A^i)$ be the game state with all supersets removed as per Lemma~\ref{basicProperties}\ref{subsetEquiv}.
  A \emph{history function} for $\gst{S}$ is a function $h$ that associates to each $i > 0$ and each set $S \in \gst{S}_i$ a \emph{parent set} $h(i, S) \in \gst{S}_{i-1}$, and to each state $q \in S$ a set of \emph{parent states} $h(i, S, q) \subseteq h(i, S)$ such that
  \begin{itemize}
  \item $S \in \delta_W(\{h(i,S)\}, A)$ for each $i$,
  \item $h(i,S)$ is the disjoint union of $h(i, S, q)$ for $q \in S$, and
  \item $\{q\} \in \delta_W(\{h(i,S,q)\}, A)$ for all $q \in S$.
  \end{itemize}
  
  The \emph{history} of a set $S \in \gst{S}_i$ from $i$ under $h$ is the sequence $S_0, S_1, \ldots, S_i = S$ with $S_{j-1} = h(j, S_j)$ for all $0 < j \leq i$.
  A history of a state $q \in S$ in $S$ under $h$ is a sequence $q_0, \ldots, q_i = q$ with $q_{j-1} \in h(j, S_j, q_j)$ for all $0 < j \leq i$.
\end{definition}

A game state can have several different history functions, and each of them defines a history for each set $S$.
A state of $S$ can have several histories under a single history function.
These histories are consistent with themselves and each other.
The proof of the main result of this section is based on the idea of choosing a ``good'' history function.
Note that we have defined the history function only for sequences of $A$-symbols, since this simplifies the definition and histories with $B$-symbols are not used in the proof.

For the rest of this section, we fix an $n$-state DFA $\mathcal{A} = (Q, \{0,1\}, q_0, \delta, F)$ that recognizes a bounded binary language and has disjoint cycles.
Let the lengths of the cycles be $k_1, \ldots, k_p$, and let $\ell$ be the number of states not part of any cycle.

We define a preorder ${\leq}$ on the state set $Q$ by reachability: $p \leq q$ holds if and only if there is a path from $p$ to $q$ in $\mathcal{A}$.
The notation $p < q$ means $p \leq q$ and $q \not\leq p$.
For two history functions $h, h'$ of a game state $\gst{S}$, we write $h \leq h'$ if for each $i > 0$, each $S \in \gst{S}_i$ and each $q \in S$, there exists a function $f : h(i, S, q) \to h'(i, S, q)$ with $p \leq f(p)$ for all $p \in h(i, S, q)$.
This defines a preorder on the set of history functions of $\gst{S}$.
We write $h < h'$ if $h \leq h'$ and $h' \not\leq h$.
A history function $h$ is \emph{minimal} if there exists no history function $h'$ with $h' < h$.
Intuitively, a minimal history function is one where the histories of states stay in the early cycles of $\mathcal{A}$ as long as possible.

\begin{lemma}
  \label{existsLocallyMinimal}
  Each game state $\gst{S} \in 2^{2^Q}$ has at least one minimal history function.
\end{lemma}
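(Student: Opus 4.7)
The plan is to construct a minimal history function by making, independently at each index $i > 0$, a locally minimal choice, then verifying that this yields a globally minimal history function.

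First I would observe that the antichains $\gst{S}_i$ depend only on $\gst{S}$ and not on any history function, since $\gst{S}_i \sim \delta_W(\gst{S}, A^i)$ is defined purely from the transition function of $W(\mathcal{A})$. Consequently, a history function $h$ amounts to a family of independent ``local'' choices, one for each $i > 0$: an assignment, for every $S \in \gst{S}_i$, of a parent set $h(i,S) \in \gst{S}_{i-1}$ with $S \in \delta_W(\{h(i,S)\}, A)$, together with a decomposition of $h(i,S)$ into sets $h(i,S,q)$ satisfying $\{q\} \in \delta_W(\{h(i,S,q)\}, A)$. Let $H_i$ denote the finite set of valid such local choices at index $i$. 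I would check $H_i \neq \emptyset$ as follows: each $S \in \gst{S}_i$ arises from some $P \in \gst{S}_{i-1}$ via $S \in \delta_W(\{P\}, A)$, which by definition means there is $f : P \to \{0,1\}$ with $S = \{\delta(p, f(p)) : p \in P\}$; grouping $P$ by the value $q = \delta(p, f(p))$ yields a valid decomposition of $P$.

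Next I would equip $H_i$ with the local preorder $\leq_\mathrm{loc}$ obtained by applying the definition of $\leq$ between history functions to index $i$ only. Since $H_i$ is finite, $\leq_\mathrm{loc}$ admits at least one minimal element $\eta_i$. Setting $h(i, \cdot, \cdot) := \eta_i$ for every $i > 0$ gives a well-defined history function, and the remaining task is to show it is minimal in the global preorder.

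The main step, and the principal obstacle, is the global-to-local reduction. Suppose for contradiction that $h' < h$: then $h' \leq h$ globally while $h \not\leq h'$ is witnessed by some triple $(i_0, S_0, q_0)$. The restriction $h'|_{i_0}$ lies in $H_{i_0}$ because $h'$ satisfies the conditions of Definition~\ref{def:History} at every index, and the global inequality $h' \leq h$ specializes at $i_0$ to $h'|_{i_0} \leq_\mathrm{loc} \eta_{i_0}$. Meanwhile the failure of $h \leq h'$ at $(i_0, S_0, q_0)$ says exactly that no function maps $\eta_{i_0}(S_0, q_0)$ into $h'(i_0, S_0, q_0)$ while respecting $\leq$, i.e.\ $\eta_{i_0} \not\leq_\mathrm{loc} h'|_{i_0}$. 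Combining these gives $h'|_{i_0} <_\mathrm{loc} \eta_{i_0}$, contradicting the local minimality of $\eta_{i_0}$ in $H_{i_0}$. The key conceptual point, which must be made carefully, is that the global preorder on history functions is precisely the coordinate-wise lift of the local preorders on the $H_i$, so a coordinate-wise finite minimization suffices.
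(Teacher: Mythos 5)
Your proof is correct and follows essentially the same route as the paper's: both exploit the facts that the antichains $\gst{S}_i$ are determined by $\gst{S}$ alone, that a history function is a family of independent local choices each drawn from a finite set, and that the global preorder is the coordinate-wise lift of the local ones, so picking locally minimal choices yields a globally minimal $h$. Your version merely spells out the nonemptiness of each choice set and the global-to-local contradiction, which the paper leaves implicit.
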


\begin{proof}
For each $i > 0$ and $S \in \gst{S}_i$, the set of possible choices for the parent $h(i, S)$ of $S$ and the parent set $h(i, S, q)$ of each state $q \in S$ is finite, and the choice is independent of the respective choices for other sets $S' \in \gst{S}_{i'}$ with $S' \neq S$ or $i' \neq i$.
If we choose the parents that are minimal with respect to ${\leq}$ for each set, the resulting history function is minimal.
\end{proof}

\begin{lemma}
\label{lem:ChangeHistory}
Let $\gst{S} \in 2^{2^Q}$ be any game state of $W(\mathcal{A})$.
Then there exist $k \leq \lcm(k_1, \ldots, k_p) + 2 n + \max_{x \neq y} \lcm (k_x, k_y)$ and $m \leq \lcm(k_1, \ldots, k_p)$ such that $\delta_W(\gst{S}, A^k) \sim \delta_W(\gst{S}, A^{k+m})$.
\end{lemma}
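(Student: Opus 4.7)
The plan is to fix a minimal history function $h$ for $\gst{S}$ (which exists by Lemma~\ref{existsLocallyMinimal}), trace the histories of the states appearing in $\gst{S}_i$ as $i$ grows, and argue that the sequence $(\gst{S}_i)_i$ eventually repeats with period dividing $L := \lcm(k_1,\ldots,k_p)$, so that $\gst{S}_k \sim \gst{S}_{k+m}$ for some $k,m$ within the stated bounds.

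The first step is structural. Since $\mathcal{A}$ has disjoint cycles and recognizes a bounded language, its underlying graph is a linearly ordered DAG of cycles connected by transient vertices, of which there are $\ell \leq n$. For any fixed $q$, any forward path in $\mathcal{A}$ from $q$ passes through at most $\ell$ transient vertices before entering a cycle and then visits a weakly increasing sequence of cycles in DAG order. I would use the minimality of $h$ to show that, for sufficiently large $i$, the origin $q_0$ of every history of every $q$ in every $S \in \gst{S}_i$ lies on a cycle $C_{x(q)}$: if a minimal history had $q_0$ on a transient vertex, one could replace that parent by a predecessor on an earlier cycle, contradicting minimality of $h$ in the preorder ${\leq}$ on $Q$.

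The second step is the periodicity argument. Once every history of every $q \in S \in \gst{S}_i$ starts on a cycle $C_{x(q)}$, the operation ``prepend $L$ additional $A$-steps to each history, going once around the starting cycle'' is well-defined because $L$ is a multiple of every $k_{x(q)}$; this maps histories to histories and preserves the disjoint-union and $\delta_W$-consistency constraints of Definition~\ref{def:History}. Applying Lemma~\ref{lemmasEquivIncl}\ref{equivChainDoubleIncl} to compare the two resulting collections of minimal sets would then yield $\gst{S}_k \sim \gst{S}_{k+L}$, so we can take $m = L \leq \lcm(k_1,\ldots,k_p)$.

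The bound on $k$ would be a sum of three contributions: (i) roughly $2n$ steps for every history of every state in every minimal set to reach its starting cycle, where the factor of two absorbs the bookkeeping forced by the disjoint-union condition on $h(i,S,q)$; (ii) one full period $L$ to complete the first cycle traversal; and (iii) a synchronization window $\max_{x\neq y}\lcm(k_x,k_y)$ that arises when two histories in the same set $S$ use different starting cycles $C_x, C_y$ and must align their phases before the uniform $L$-shift acts coherently on $S$. Summing these gives $k \leq L + 2n + \max_{x \neq y} \lcm(k_x, k_y)$, as required.

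The main obstacle will be making the minimality argument precise, particularly the interplay between set-level minimality (supersets removed per Lemma~\ref{basicProperties}\ref{subsetEquiv}) and the history-function minimality defined in terms of the reachability preorder. Showing that the cycle-shift operation descends from histories to the actual minimal game states $\gst{S}_i$ without creating or destroying minimal sets is where most of the work lies, and it is this set-level synchronization across different starting cycles that forces the appearance of the $\max_{x \neq y} \lcm(k_x, k_y)$ term in the bound.
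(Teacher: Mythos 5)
Your scaffolding (a minimal history function from Lemma~\ref{existsLocallyMinimal}, a period $m = L := \lcm(k_1,\ldots,k_p)$, and a final appeal to Lemma~\ref{lemmasEquivIncl}\ref{equivChainDoubleIncl}) matches the paper, but the core of the paper's argument is missing from your proposal, and the two places where you invoke minimality are not correct as stated. First, your structural step is based on a misreading of Definition~\ref{def:History}: the origin $q_0$ of a history is an element of some $S_0 \in \gst{S}_0 = \gst{S}$, which is part of the given data; it has no parent, it cannot be ``replaced by a predecessor on an earlier cycle,'' and it need not lie on any cycle for any $i$ (e.g.\ when $\gst{S} = \{\{q_0\}\}$ for the transient initial state). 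Second, and more importantly, the step that actually uses minimality in the paper is absent: one must show that under a minimal $h$, a history spends at most $\lcm(k_{i-1},k_i) \leq \max_{x\neq y}\lcm(k_x,k_y)$ steps in every cycle \emph{other than the first one it enters}. This is proved by an exchange argument: if a history lingered longer in a later cycle $C_i$, one could delay its entry into $C_i$ by $a$ extra full loops of $C_{i-1}$ (where $a k_{i-1} = b k_i$), and the modified trajectory either produces a proper subset of some $S_{s+K}$ (contradicting the removal of supersets per Lemma~\ref{basicProperties}\ref{subsetEquiv}) or yields a history function $h' < h$. This dwell-time bound, together with $|u_0|\leq\ell$, forces every history to sit inside its first cycle throughout the time window $[\ell,\ell+L]$ once $t$ is large, giving the synchronized identity $S_\ell = S_{\ell+L}$ for \emph{every} $S$ in the game state. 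That is what the $\max_{x\neq y}\lcm(k_x,k_y)$ term is for; your attribution of it to ``phase alignment between different starting cycles within one set'' does not correspond to an argument.

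Without that dwell-time bound, your ``prepend $L$ loop steps'' operation delivers at most one of the two inclusions required by Lemma~\ref{lemmasEquivIncl}\ref{equivChainDoubleIncl} (each set present at time $k$ reappears, up to subsets, at time $k+L$). The harder direction --- that each set present at time $k+L$ contains a set present at time $k$ --- requires \emph{excising} $L$ steps from the histories of all states of that set at a common time offset, and a priori a history could spend almost all of its time drifting through later cycles, leaving no single cycle from which $L$ steps can be cut coherently. The identity $S_\ell = S_{\ell+L}$ is precisely what rules this out, and it is unavailable in your argument. You correctly flag ``making the minimality argument precise'' as the main obstacle, but that obstacle is the theorem: as written, the proposal has a genuine gap at its central step.
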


\begin{proof}
Denote the cycles of $\mathcal{A}$ by $C_1, \ldots, C_p$, so that $|C_i| = k_i$ for each $i$.
Let $h$ be a minimal history function of $\gst{S}$, given by Lemma~\ref{existsLocallyMinimal}.
Define $\gst{S}_i$ for $i \geq 0$ as in Definition~\ref{def:History}.

Let $t \geq 0$, $S \in \gst{S}_t$ and $q \in S$ be arbitrary, and let $S_0, \ldots, S_t = S$ and $q_0, \ldots, q_t = q$ be their histories under $h$.
The history of $q$ travels through some of the cycles of $\mathcal{A}$, never entering the same cycle twice.
We split the sequence $q_0, \ldots, q_n$ into words over $Q$ as $u_0 v_1^{p_1} u_1 v_2^{p_2} u_2 \cdots v_r^{p_r} u_r$, where
\begin{itemize}
\item
  each $p_j \geq 1$,
\item
  each $v_j$ consists of the states of some cycle, which we may assume is $C_j$, repeated exactly once,
\item
  the $u_j$ do not repeat states and each $u_j$ does not contain any states from $C_{j+1}$.
\end{itemize}
Intuitively, $v_j$ represents a phase of the history where the state stays in a cycle for several loops, and the $u_j$ represent transitions from one loop to another.
Each $u_j$ ends right before the time step when the history of $q$ enters the loop $C_{j+1}$.
It may share a nonempty prefix with $v_j$.

We claim that $p_i k_i \leq \max_{x \neq y} \lcm (k_x, k_y)$ holds for all $1 < i \leq r$.
Assume the contrary.
Since $k_i = |v_i|$, we have in particular $p_i |v_i| > \lcm(|v_{i-1}|, |v_i|)$ for some $i$, so that $a |v_{i-1}| = b |v_i|$ holds for some $a > 0$ and $0 < b \leq p_i$.
Denote $s = |u_0 v_1^{p_1} \cdots u_{i-1}|$, which is the time step after which the history of $q$ enters the repetitive portion of the previous loop $C_{i-1}$.
Denote $K = |v_{i-1}^a u_{i-1}|$ and $q'_{s+1}, q'_{s+2}, \ldots, q'_{s + K} = v_{i-1}^a u_{i-1}$.
Note that we may have $q'_{s+j} = q_{s+j}$ for some $1 \leq j < K$, but $q'_{s+K} < q_{s+K}$ since the former is not in $C_i$ while the latter is.
In $\mathcal{A}$ we have transitions from $q_s$ to both $q_{s+1}$ and $q'_{s+1}$, and from each $q'_{s+j}$ to $q'_{s+j+1}$, as well as from $q'_{s+K}$ to $q_{s+K+1}$.
For $q \leq j \leq K$ the game state $\delta_W(\gst{S}, A^{s+j})$ contains $S'_{s+j} =: (S_{s+j} \setminus \{q_{s+j}\}) \cup \{q'_{s+j}\}$.
We also have $S_{s+K+1} \in \delta_W(\{S'_{s+K}\}, A)$.
There are now two possibilities.
If $q'_{s+K} \in S_{s+K}$, then $S'_{s+K} \in \delta_W(\gst{S}, A^{s+K})$ is a proper subset of $S_{s+K}$, which contradicts our choice of $\gst{S}_{s+K}$ as a version of $\delta_W(\gst{S}, A^{s+K})$ with all proper supersets removed.
If $q'_{s+K} \notin S_{s+K}$, then we may define a new history function $h'$ by defining $h'(s+K+1, S) = S'_{a+K}$, $h'(s+K+1, S, q) = (h(s+K+1, S, q) \setminus \{q_{s+K}\}) \cup \{q'_{s+K}\}$, and $h'(t, S', q') = h(t, S', q')$ for all other choices of $t$, $S'$ and $q'$.
Then the function $f : h'(s+K+1, S, q) \to h(s+K+1, S, q)$ defined by $f(q'_{s+K}) = q_{s+K}$ and $f(q') = q'$ for other $q' \in h'(s+K+1, S, q)$ shows $h' < h$, which contradicts the local minimality of $h$.

We have now shown $p_i k_i \leq \max_{x \neq y} \lcm (k_x, k_y)$ for all $1 < i \leq r$.
Denote $L = \lcm(k_1, \ldots, k_p)$.
If $t \geq L + 2 \ell + p \cdot \max_{x \neq y} \lcm (k_x, k_y)$, then $p_1 k_1 \geq L + \ell$, which implies $q_\ell = q_{\ell+L}$ (note that $|u_0| \leq \ell$, so that $q_\ell, q_{\ell + L} \in C_1$).
Since this holds for every history of every state of $S$ under $h$ and each state of each set $S_i$ for $i \leq t$ can be chosen as $q_i$ for some $q \in S$, we have $S_\ell = S_{\ell+L}$.
Then $S \in \delta_W(\{S_{\ell+L}\}, A^{t-\ell-L}) = \delta_W(\{S_\ell\}, A^{t-\ell-L})$, so in particular $S \in \delta_W(\gst{S}, A^{t-L}) \sim \gst{S}_{t-L}$.
On the other hand, $S \in \delta_W(\{S_\ell\}, A^{t-\ell}) = \delta_W(\{S_{\ell+L}\}, A^{t-\ell})$, so $S \in \delta_W(\gst{S}, A^{t+L}) \sim \gst{S}_{t+L}$.
Since $S \in \gst{S}_t$ was arbitrary, we have $\gst{S}_t \sim \gst{S}' \subseteq \gst{S}_{t-L}$ and $\gst{S}_t \sim \gst{S}'' \subseteq \gst{S}_{t+L}$ for some game states $\gst{S}', \gst{S}'' \in 2^{2^Q}$.
By considering $t+L$ instead of $t$ and doing the same analysis, we obtain $\gst{S}_t \sim \gst{S}_{t+L}$.
\end{proof}

\begin{theorem}
Let $\mathcal{A}$ be an $n$-state DFA that recognizes a bounded binary laguage.
Then there is a partition $\ell + k_1 + \cdots + k_p = n$ such that the minimal DFA for $W(\mathcal{L}(\mathcal{A}))$ has at most $\sum_{m = 0}^{\ell + p + 1} (p \cdot \max_{x \neq y} \lcm(k_x, k_y) + 2 \ell + 2 \lcm(k_1, \ldots, k_p))^m$ states.
\end{theorem}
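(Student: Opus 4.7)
The plan is to first put $\mathcal{A}$ into the structural form required by Lemma~\ref{lem:ChangeHistory}. Since $\mathcal{L}(\mathcal{A})$ is bounded, a standard transformation yields an equivalent DFA whose strongly connected components are either single disjoint cycles of lengths $k_1, \ldots, k_p$ or single trivial components (counted by $\ell$), giving the partition $\ell + k_1 + \cdots + k_p = n$. After this reduction I may apply Lemma~\ref{lem:ChangeHistory} to any game state of $W(\mathcal{A})$. Set $N := p \cdot \max_{x \neq y} \lcm(k_x, k_y) + 2\ell + 2\lcm(k_1, \ldots, k_p)$, which is the quantity raised to powers in the target bound.

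Next, I would show that every reachable game state is equivalent to one reached by a ``short'' word. Write an arbitrary $w \in \{A,B\}^*$ as $A^{i_0} B A^{i_1} B \cdots B A^{i_r}$. Processing $A$-blocks from left to right and invoking Lemma~\ref{lem:ChangeHistory} on the game state at the start of each block, each exponent $i_j$ can be reduced modulo some $m_j \leq \lcm(k_1,\ldots,k_p)$ down to a value strictly less than $N$ without changing the equivalence class of the resulting game state. Hence every reachable state is equivalent to one reached by some $w'$ of the above form with $0 \leq i_j < N$.

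The main obstacle is bounding the number $r$ of $B$'s. I would argue via a monotone ``depth'' statistic on game states, defined from the partial order on $Q$ induced by the disjoint-cycles structure (tree levels plus cycle indices), with total height at most $\ell + p$. Both $A$- and $B$-transitions can only move constituent states forward in this order, so depth is monotone in the computation. The key lemma I would need is that whenever a $B$-transition does not strictly increase the depth of the game state modulo equivalence, the adjacent $A$-blocks can be merged (using the periodicity from Lemma~\ref{lem:ChangeHistory} together with items~\ref{subsetEquiv} and~\ref{unionEquiv} of Lemma~\ref{basicProperties}) so as to eliminate that $B$ entirely, at the cost of growing the neighbouring $A$-exponents still within the bound $N$. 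This forces the number of ``essential'' $B$'s to be at most $\ell + p$, giving $r \leq \ell + p$. Making this merger argument precise and checking that the new exponents stay below $N$ is the delicate step.

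Finally, counting: the number of normal-form words $A^{i_0} B A^{i_1} B \cdots B A^{i_r}$ with $0 \leq i_j < N$ and $0 \leq r \leq \ell + p$ is
\[
  \sum_{r=0}^{\ell+p} N^{r+1} \;\leq\; \sum_{m=0}^{\ell+p+1} N^m,
\]
so this is an upper bound on the number of equivalence classes, and hence on the number of states of the minimal DFA of $W(\mathcal{L}(\mathcal{A}))$.
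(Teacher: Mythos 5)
Your overall skeleton --- reduce to a disjoint-cycles DFA giving the partition $\ell + k_1 + \cdots + k_p = n$, use Lemma~\ref{lem:ChangeHistory} to cut every $A$-block below $P := p \cdot \max_{x \neq y} \lcm(k_x,k_y) + 2\ell + 2\lcm(k_1,\ldots,k_p)$, and then count normal forms $A^{i_0}BA^{i_1}\cdots BA^{i_r}$ --- matches the paper, and the $A$-block reduction and the final count are fine. The genuine gap is in how you bound the number $r$ of $B$'s. You propose a ``merger lemma'': any $B$ that does not strictly increase a depth statistic can be eliminated by fusing the neighbouring $A$-blocks without changing the right-equivalence class of the resulting game state. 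You flag this yourself as the delicate step, and it is indeed unproven; worse, it is a much stronger statement than the theorem needs, and it is far from clear that it is true. A $B$-transition replaces each set $S$ of a game state by the set of all successors of its elements; even when this does not advance a depth statistic it can genuinely enlarge the sets and change which continuations are accepted, and nothing in Lemma~\ref{basicProperties} or Lemma~\ref{lem:ChangeHistory} lets you undo that by lengthening adjacent runs of $A$. You would also have to explain why only the $B$'s, and not the $A$'s (which move constituent states forward just as well), are charged against the depth budget $\ell + p$.

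The paper avoids all of this with a direct game-theoretic observation: if a turn order contains more than $\ell + p$ occurrences of $B$, then Bob wins outright, because on each of his turns he can either leave the current cycle or depart a non-cycle state, and in a disjoint-cycles automaton neither a cycle once left nor a non-cycle state once departed can ever be revisited; there are only $p + \ell$ such resources. Consequently every word with more than $\ell + p$ occurrences of $B$, together with all of its extensions, lies outside $W(\mathcal{L}(\mathcal{A}))$, so all such words are right-equivalent and contribute at most one dead state to the minimal DFA (this is what the extra $m = 0$ term of the sum absorbs). You therefore do not need to show that every reachable state is reached by a word with few $B$'s; you only need that the words with many $B$'s collapse to a single state. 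Replacing your step on bounding $r$ with this argument closes the gap.
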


\begin{proof}
Denote the minimal DFA for $W(\mathcal{L}(\mathcal{A}))$ by $\mathcal{B}$.
We may assume that $\mathcal{A}$ is minimal, and then it has disjoint cycles.
Let $k_1, \ldots, k_p$ be the lengths of the cycles and $\ell$ the number of remaining states, and denote $P = p \cdot \max_{x \neq y} \lcm(k_x, k_y) + 2 \ell + 2 \lcm(k_1, \ldots, k_p)$.
Then the language of $W(\mathcal{A})$ only contains words that have at most $\ell + p$ occurrences of $B$: in a game whose turn order has more $B$s than that, Bob can win by choosing to leave a cycle whenever possible, since the $\ell$ states outside the cycles can never be returned to.

Consider a word $w = A^{t_0} B A^{t_1} B \cdots B A^{t_m}$ with $0 \leq m \leq \ell + p$.
If $t_i \geq P$ for some $i$, then Lemma~\ref{lem:ChangeHistory} implies $\delta_W(\gst{S}, A^{t_i}) \sim \delta_W(\gst{S}, A^t)$ for the game state $\gst{S} = \delta_W(\{\{q_0\}\}, A^{t_0} B \cdots A^{t_{i-1}} B)$ and some $t < t_i$.
Thus the number of distinct states of $\mathcal{B}$ reachable by words of this form is at most $P^{m+1}$.
The claim directly follows.
\end{proof}

The state complexity implied by the result (the maximum of the expression taken over all partitions of $n$) is at least $n^n$.
In particular, it grows superexponentially.
We don't know whether the actual complexity of the winning set operation on bounded regular languages is exponential or not.
If we combine the gadgets $\GenSubset_n$ and $\Testing_n$, the resulting DFA recognizes a language whose winning set requires at least $2^n$ states,
so for finite regular languages (and therefore for bounded regular languages) the state complexity of the winning set is at least exponential.

\section{Chain automata}

In this section we investigate a family of binary automata consisting of a chain of states with a self-loop on each state.
More formally, a \emph{chain automaton} is a DFA $\mathcal{A} = (Q, \{0,1\}, q_0, \delta, F)$ where $Q = \{0, 1, \ldots, m+p-1\}$, $q_0 = 0$, $\delta(i, 0) = i$ and $\delta(i, 1) = i+1$ for all $i \in Q$ except $\delta(m+p-1, 1) = m$.
The automaton is \emph{$1$-bounded} if $p = 0$ and the state $m-1$ is not final.
See Figure~\ref{chainGeneral} for a $1$-bounded chain automaton.
It is easy to see that chain automata recognize exactly the regular languages $L$ such that $w \in L$ depends only on $|w|_1$, and the $1$-bounded subclass recognizes those where $|w|_1$ is also bounded.
% For the rest of this section, we fix some chain automaton $\mathcal{A}$ with $n+1$ states, and $\equiv$ stands for $\equiv_{W(L(A))}$.
Of course, the labels of the transitions have no effect on the winning set $W(\mathcal{L}(\mathcal{A}))$ so the results of this section apply to every DFA with the structure of a chain automaton.

\begin{figure}[htp]
    \begin{center}
        \begin{tikzpicture}
	[nonaccept/.style={circle,draw,inner sep=0cm,minimum size=0.9cm},
	 accept/.style={circle,draw,double,inner sep=0cm,minimum size=0.9cm},
	 every edge/.style={draw,->,>=stealth},
	 scale=0.43]
		\node (i) at (-10, 0) {};
		\node [nonaccept] (n0) at (-7, 0) {$0$};
		\node [nonaccept] (n1) at (-3, 0) {$1$};
		\node [accept] (n2) at (1, 0) {$2$};
		\node [nonaccept] (n3) at (5, 0) {$3$};
		\node (n4) at (8, 0) {$\cdots$};
		\node [accept] (n5) at (11, 0) {$n-2$};
		\node [nonaccept] (n6) at (15, 0) {$n-1$};

		\draw (i) edge (n0);
		\foreach \k/\kk in {0/1,1/2,2/3,3/4,4/5,5/6}{
			 \draw (n\k) edge node [midway,above] {$0$} (n\kk);
		}
		\foreach \k in {0,1,2,3,5,6}{
			\draw [in=55, out=125, loop, looseness=4] (n\k) edge node [midway,above] {$1$} (n\k);
		}
		\draw [in=40, out=140, loop] (n6) edge node [midway,above] {$0$} (n6);
		
\end{tikzpicture}

    \end{center}

    \caption{\label{chainGeneral} A $1$-bounded chain automaton. Note that the state $n-1$ is not final.}
\end{figure}
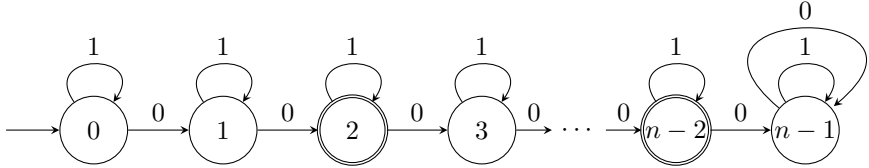

\begin{lemma}
  \label{chainEquivWords}
  Let $\mathcal{A}$ be an $n$-state chain automaton, and denote $\equiv_{W(\mathcal{L}(\mathcal{A}))}$ by $\equiv$.
  \begin{roster}
  \item
    \label{chain1} For every state $q \in Q$ and every $S \in \delta_W(\{\{q\}\}, A B)$, there exists $R \in \delta_W(\{\{q\}\}, B A)$ with $R \subseteq S$.
  \item
    \label{chain2} For all $k \in \mathbb{N}$, $B^k A^k B^{k+1} \equiv B^{k+1} A^k B^k$. %are equivalent on singletons sets
  \item
    \label{chain3} For all $k \in \mathbb{N}$, $A^{k+1} B^k A^k \equiv A^k B^k A^{k+1}$. % are equivalent on singletons sets  
  \item
    \label{chain4} $A^{n-1} \equiv A^n$ and $B^{n-1} \equiv B^n$.
  \end{roster}
\end{lemma}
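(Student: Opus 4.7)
My plan is to dispatch the four claims separately. Parts (a) and (d) follow from direct calculations on the chain automaton, while parts (b) and (c) are reduced, via Lemma~\ref{lemmasEquivIncl}\ref{equivChainSingletons}, to analyzing a counter-game between the two players.

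For (a), I compute both game states directly. Since $\delta(q, 0) = q$ and $\delta(q, 1) = q + 1$ in a chain automaton (with possible wrap-around), $\delta_W(\{\{q\}\}, A) = \{\{q\}, \{q + 1\}\}$ and thus $\delta_W(\{\{q\}\}, AB) = \{\{q, q + 1\}, \{q + 1, q + 2\}\}$. In the other order, $\delta_W(\{\{q\}\}, B) = \{\{q, q + 1\}\}$, and enumerating the four functions $f : \{q, q + 1\} \to \{0, 1\}$ shows that $\delta_W(\{\{q, q + 1\}\}, A)$ contains the merged singleton $\{q + 1\}$ (obtained from $f(q) = 1$ and $f(q + 1) = 0$). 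Since $\{q + 1\}$ is a subset of both members of $\delta_W(\{\{q\}\}, AB)$, it serves as the required $R$ in each case.

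For parts (b) and (c), I apply Lemma~\ref{lemmasEquivIncl}\ref{equivChainSingletons} to reduce to comparing the accepting status of $\delta_W(\{\{q\}\}, w_1)$ and $\delta_W(\{\{q\}\}, w_2)$ for every state $q$. Since the state of the chain automaton after reading $v$ depends only on $|v|_1$, the game reduces to a counter-game in which each player adds $0$ or $1$ on their turn, and Alice wins iff the final count lies in $F_q = \{j : \delta(q, 1^j) \in F\}$. Moreover, inside any maximal same-player block the player gains no new information, so their optimal play reduces to a single precommitment of the block's total $1$-count. This collapses $w_1 = B^k A^k B^{k+1}$ and $w_2 = B^{k+1} A^k B^k$ into the quantifier formulas
\[
\forall y_1 \in [0, k], \exists x \in [0, k]: [y_1 + x, y_1 + x + k + 1] \subseteq F_q
\]
and
\[
\forall y_1' \in [0, k + 1], \exists x' \in [0, k]: [y_1' + x', y_1' + x' + k] \subseteq F_q,
\]
respectively. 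The implication $w_1 \Rightarrow w_2$ follows by reusing the $w_1$-witness for $y_1 = \min(y_1', k)$ and observing that the $w_2$-interval is a sub-interval of the $w_1$-interval. For the converse, I combine the $w_2$-witnesses at $y_1' = y_1$ and $y_1' = y_1 + 1$: when they are distinct, their allowed ranges force their intervals to glue into a contiguous block of length at least $k + 2$, yielding the $w_1$-witness; when they coincide at some $s^*$, I invoke a further witness at $y_1' = y_1 \pm 2$ to extend the interval in the required direction. Part (c) proceeds by a dual argument with an $\exists \forall \exists$ quantifier pattern, yielding formulas involving nonempty intersections of intervals rather than subset containments.

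For (d), both $\delta_W(\{\{q\}\}, A^{n - 1})$ and $\delta_W(\{\{q\}\}, A^n)$ equal $\{\{r\} : r \in R_q\}$, and both $\delta_W(\{\{q\}\}, B^{n - 1})$ and $\delta_W(\{\{q\}\}, B^n)$ equal $\{R_q\}$, where $R_q$ is the set of states reachable from $q$; indeed, the reachability orbit stabilizes after at most $n - 1$ increments since $|R_q| \leq n$. Lemma~\ref{lemmasEquivIncl}\ref{equivChainSingletons} completes the proof. The main obstacle, as foreshadowed, is the converse direction of (b) and its analogue in (c): the pairwise-witness argument breaks down when the two $w_2$-witnesses coincide, and handling this degenerate case rigorously---especially at the boundary values $y_1 = 0$ and $y_1 = k$---requires careful subcase analysis using additional witnesses from the $w_2$ formula.
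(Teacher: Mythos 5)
Your reduction of the chain automaton to a counter game is sound, and parts \ref{chain1} and \ref{chain4} are correct and complete (part \ref{chain1} is exactly the paper's spacetime-diagram computation made explicit). The quantifier formulas you write down for \ref{chain2} are also the right ones, and your forward implication $w_1 \Rightarrow w_2$ works. The genuine gap is the converse of \ref{chain2} and essentially all of \ref{chain3}. For the converse of \ref{chain2}, your plan is to glue the witness intervals at $y_1' = y_1$ and $y_1' = y_1+1$; this works when they are distinct, but in the coinciding case your proposed fix --- invoking a witness at $y_1' = y_1 \pm 2$ --- does not close the argument: that witness can coincide with the same interval again (e.g.\ $F_q = \{2,3,\ldots\}$, $k=2$, where the witnesses at $y'=0,1,2$ may all start at $2$), the value $y_1+2$ is out of range when $y_1 = k$, and gluing with a witness at $y_1' < y_1$ can produce a length-$(k+2)$ interval whose left endpoint lies below $y_1$, which is not a valid witness for $w_1$. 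You acknowledge this is unresolved, and for \ref{chain3} you give no argument at all beyond asserting duality, so the crux of the lemma is not proved.

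There is a much shorter way through your own formulas, which is also what the paper does geometrically. In the $w_2$ formula for \ref{chain2}, the witness interval $[y'+x',\, y'+x'+k]$ for each $y' \in [0,k+1]$ always contains the point $y'+k$ (since $0 \le x' \le k$); hence $w_2$ forces $\{k, k+1, \ldots, 2k+1\} \subseteq F_q$, and this single interval of length $k+2$ starts at $k \in [y_1, y_1+k]$ for \emph{every} $y_1 \in [0,k]$, giving $w_1$ immediately. (The paper's version: every set produced by $B^{k+1}A^kB^k$ contains the interval $\{k,\ldots,2k+1\}$, which is exactly the set Alice can produce with $B^kA^kB^{k+1}$.) Similarly for \ref{chain3}: from $\exists x_1 \in [0,k]\, \forall y\, \exists x_2$, instantiating $y = k - x_1$ shows $F_q$ meets $[k,2k]$ at some point $j$, and then $x_1' = j-k \in [0,k+1]$ witnesses the $A^{k+1}B^kA^k$ formula, since Alice can steer any $y \in [0,k]$ to land exactly on $j$. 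I would also flag that your parts \ref{chain2} and \ref{chain3} establish only that $\delta_W(\{\{q\}\},v)$ and $\delta_W(\{\{q\}\},w)$ have the same acceptance status, which is the literal hypothesis of Lemma~\ref{lemmasEquivIncl}\ref{equivChainSingletons}; the paper's route additionally yields mutual subset-domination of the produced sets (hence equivalence of the resulting game states), which is what one really wants for the two-sided congruence $\equiv$ used later.
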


The intuition for~\ref{chain1} is that $B A$ produces game states that are better for Alice than $A B$, since Alice can undo any damage Bob just caused.

\begin{proof}
  Label the states of $\mathcal{A}$ by $\{0, 1, \ldots, m+p-1\}$ as in the definition of chain automata.
  For \ref{chain1}, \ref{chain2} and \ref{chain3} we ``unroll'' the loop $m, m+1, \ldots, m+p-1$ to obtain
  an equivalent automaton with an infinite chain of states, simplifying the arguments.
  We also argue in terms of the NFA for $W(\mathcal{L}(\mathcal{A}))$, saying that ``we produce a set $R \subseteq Q$ from $S \subseteq Q$ by reading $w \in \{A,B\}^*$'' if $R \in \delta_W(\{S\}, w)$.
  
  % For (a)  we use \ref{equivChainSingletons}.
  %For \ref{chain1}, we use \ref{lemmasEquivIncl}\ref{equivChainSingletons}.
  % Assume witout loss of generalities that we start from the singleton state $\{0\}$.
  In this formalism \ref{chain1} means that for any set produced from $\{q\}$ by reading $A B$, we can produce a subset of it by reading $B A$.
  To see what sets can be produced, we use
  a spacetime diagram (Figure~\ref{ABinclBA}) where the time increases to the south. In this 
  diagram, by reading a $B$, a selected state will spread south and southeast. By reading an
  $A$, we have both possibilities, resulting in multiple sets.
  The claim follows directly.

    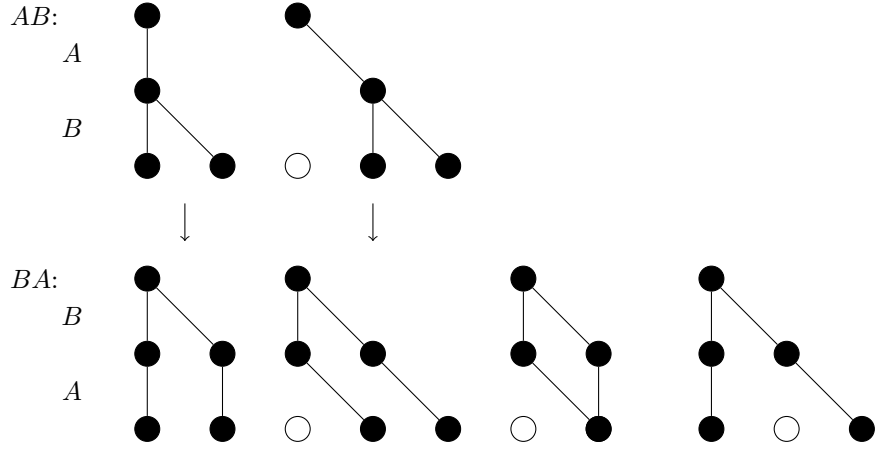
\begin{figure}[htp]
    
        \begin{center}
            \begin{tikzpicture}[scale=0.5]
		\node  (0) at (-4, 4) {$AB$:};
		\node  (1) at (-4, -3) {$BA$:};
		\node [style=dot] (2) at (-1, 4) {};
		\node [style=dot] (3) at (-1, 2) {};
		\node [style=dot] (4) at (-1, 0) {};
		\node [style=dot] (5) at (1, 0) {};
		\node [style=dot] (6) at (3, 4) {};
		\node [style=dot] (7) at (5, 2) {};
		\node [style=dot] (8) at (5, 0) {};
		\node [style=dot] (9) at (7, 0) {};
		\node [style=dot] (10) at (-1, -3) {};
		\node [style=dot] (11) at (-1, -5) {};
		\node [style=dot] (12) at (-1, -7) {};
		\node [style=dot] (13) at (1, -7) {};
		\node [style=dot] (14) at (14, -3) {};
		\node [style=dot] (15) at (16, -5) {};
		\node [style=dot] (17) at (18, -7) {};
		\node [style=dot] (18) at (1, -5) {};
		\node [style=dot] (19) at (14, -5) {};
		\node [style=dot] (20) at (14, -7) {};
		\node [style=dot] (21) at (3, -3) {};
		\node [style=dot] (22) at (3, -5) {};
		\node [style=dot] (23) at (5, -7) {};
		\node [style=dot] (24) at (7, -7) {};
		\node [style=dot] (25) at (5, -5) {};
		\node [style=dot] (26) at (9, -3) {};
		\node [style=dot] (27) at (11, -5) {};
		\node [style=dot] (28) at (11, -7) {};
		\node [style=dot] (29) at (9, -5) {};
		\node [style=dot] (30) at (11, -7) {};
		\node  (35) at (0, -2) {};
		\node  (36) at (0, -1) {};
		\node  (37) at (5, -1) {};
		\node  (38) at (5, -2) {};
		\node  (39) at (-3, 3) {$A$};
		\node  (40) at (-3, 1) {$B$};
		\node  (41) at (-3, -4) {$B$};
		\node  (42) at (-3, -6) {$A$};
		\node [style=small] (43) at (16, -7) {};
		\node [style=small] (44) at (3, -7) {};
		\node [style=small] (45) at (3, 0) {};
		\node [style=small] (46) at (9, -7) {};
		\draw (2) to (3);
		\draw (3) to (5);
		\draw (3) to (4);
		\draw (6) to (7);
		\draw (7) to (8);
		\draw (7) to (9);
		\draw (10) to (11);
		\draw (11) to (12);
		\draw (14) to (15);
		\draw (15) to (17);
		\draw (10) to (18);
		\draw (18) to (13);
		\draw (14) to (19);
		\draw (19) to (20);
		\draw (21) to (22);
		\draw (22) to (23);
		\draw (21) to (25);
		\draw (25) to (24);
		\draw (26) to (27);
		\draw (27) to (28);
		\draw (26) to (29);
		\draw (29) to (30);
		\draw [style=new edge style 0] (36.center) to (35.center);
		\draw [style=new edge style 0] (37.center) to (38.center);
\end{tikzpicture}

        \end{center}

        \caption{\label{ABinclBA} Sets produced with $AB$ and $BA$. The arrows indicate a subset relation.}
    \end{figure}

    We now prove \ref{chain2}.
    By Lemma~\ref{lemmasEquivIncl}\ref{equivChainSingletons} it is enough to consider singleton states $\{q\}$, and without loss of generality we assume $q = 0$.
    Because of (a) it is sufficient to prove that for every set obtained
    from $B^{k+1} A^k B^k$, we can produce a subset of it by reading $B^k A^k B^{k+1}$.
    After reading $B^{k+1}$, we have the interval $\{0, \dots k+1\}$. After $A^k$ 
    we get a set included in $\{0,\dots, 2k+2\}$ where the distance between every two consecutive elements is less than $k$.
    After reading $B^k$ the gaps are filled and we get an interval containing
    $ \{k, \dots, 2k+2\}$.
    When reading $B^k A^k B^{k+1}$ we do the following: get $\{0, \dots k\}$ with $B^k$,
    make every element go to position $k$ with $A^k$, and extend with $B^{k+1}$ to get the interval $\{k, \dots, 2k+2\}$.

    For \ref{chain3}, for the same reason as previously it is sufficient to prove that for every set obtained from $\{0\}$
    by reading $A^k B^k A^{k+1}$, we can produce a subset by reading $A^{k+1} B^k A^k$.
    First we prove that by reading $A^{k+1} B^k A^k$ we can get any singleton set
    $\{k\}, \{k+1\}, \ldots, \{2k+1\}$: By reading $A^{k+1} B^k$, we can get any singleton
    set between $0$ and $k+1$ and expand it to have any interval of length $k+1$
    between $0$ and $2k+1$.
    Then by reading $A^k$ we can have a singleton state at the end position
    of the interval, that is between $k$ and $2k+1$.
    See Figure~\ref{AnABnAn}.

    \begin{figure}[htp]
    
        \begin{center}
            \begin{tikzpicture}[scale=0.6]
		\node [fill,circle] (48) at (0, 13) {};
		\node [draw,dotted,circle,thick] (49) at (0, 8) {};
		\node [draw,dotted,circle,thick] (50) at (6, 8) {};
		\node [draw,dotted,circle,thick] (51) at (1, 8) {};
		\node [draw,dotted,circle,thick] (52) at (2, 8) {};
		\node [draw,dotted,circle,thick] (53) at (3, 8) {};
		\node [draw,dotted,circle,thick] (54) at (4, 8) {};
		\node [draw,dotted,circle,thick] (55) at (5, 8) {};
		\node  (56) at (0, 9) {};
		\node  (57) at (2, 10.5) {};
		\node  (58) at (4.5, 9) {};
		\node  (59) at (6, 8.75) {$k+1$};
		\node  (60) at (0, 8.5) {$0$};
		\node  (61) at (1, 5) {};
		\node  (62) at (6, 5) {};
		\node  (63) at (1, 8) {};
		\node  (64) at (0, 5) {};
		\node  (65) at (12, 5) {};
		\node  (66) at (12, 6) {$2k + 1$};
		\node  (67) at (6, 2) {};
		\node  (68) at (-2, 11) {$A^{k+1}$};
		\node  (69) at (-2, 6.5) {$B^k$};
		\node  (70) at (-2, 3.25) {$A^k$};
		\node  (73) at (4, 2.5) {$k$};
		\node  (74) at (12, 2.75) {$2k + 1$};
		\node  (75) at (1, 10.5) {};
		\node  (76) at (3.5, 9) {};
		\node  (77) at (3.5, 10.5) {};
		\node  (78) at (6, 9) {};
		\node  (79) at (3, 5.5) {length $k+1$};
		\node [draw,dotted,circle,thick] (80) at (5, 2) {};
		\node [draw,dotted,circle,thick] (81) at (11, 2) {};
		\node [draw,dotted,circle,thick] (82) at (6, 2) {};
		\node [draw,dotted,circle,thick] (83) at (7, 2) {};
		\node [draw,dotted,circle,thick] (84) at (8, 2) {};
		\node [draw,dotted,circle,thick] (85) at (9, 2) {};
		\node [draw,dotted,circle,thick] (86) at (10, 2) {};
		\node [draw,dotted,circle,thick] (87) at (12, 2) {};
		\node  (88) at (0, 5.5) {};
		\node  (89) at (0, 4.5) {};
		\node  (90) at (12, 5.5) {};
		\node  (91) at (12, 4.5) {};
		\draw [style=new edge style 0] (48) to (56.center);
		\draw [style=new edge style 0] (48) to (57.center);
		\draw [style=new edge style 0] (57.center) to (58.center);
		\draw (63.center) to (61.center);
		\draw (61.center) to (62.center);
		\draw (62.center) to (63.center);
		\draw [dotted] (64.center) to (65.center);
		\draw (61.center) to (67.center);
		\draw (67.center) to (62.center);
		\draw [style=new edge style 0] (75.center) to (76.center);
		\draw [style=new edge style 0] (77.center) to (78.center);
		\draw (77.center) to (48);
		\draw (48) to (75.center);
		\draw (88.center) to (89.center);
		\draw (90.center) to (91.center);
\end{tikzpicture}

        \end{center}

        \caption{\label{AnABnAn} The sets produced by $A^{k+1} B^k A^k$.}
    \end{figure}
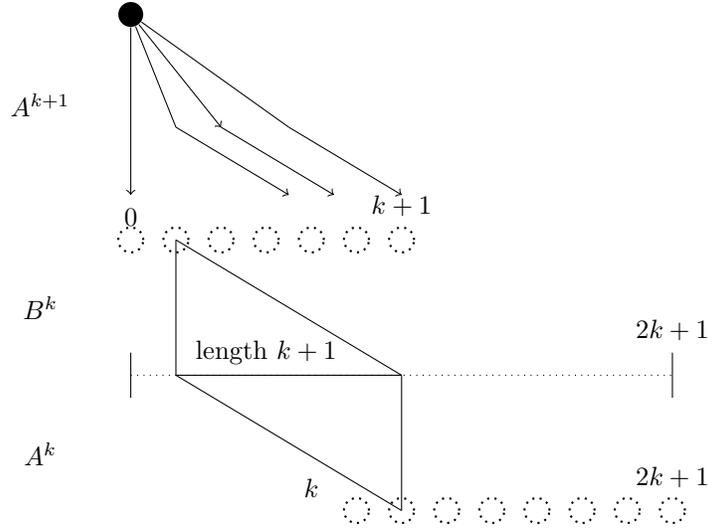

    Now we prove that every set obtained by reading  $A^k B^k A^{k+1}$ has 
    at least one of $k, \ldots, 2k+1$.
    After reading $A^k B^k$ we have any interval of size $k+1$ between $0$ and $2k$. The state $k$ is always in the interval,
    and must go somewhere between positions $k$ and $2k+1$ after $A^{k+1}$.

    %\begin{figure}[htp]
    %
    %    \begin{center}
    %        \tikzfig{AnBnAAnGapless}
    %    \end{center}
    %    
    %
    %
    %    \caption{\label{AnBnAAn}}
    %\end{figure}

    As for \ref{chain4}, reading $A^{n-1}$ or $A^n$ in any singleton game state $\{q\}$ produces exactly the states $\{q\}, \{q+1\}, \ldots, \{m-1\}$ and the loop $\{m\}, \ldots, \{m+p-1\}$.
    Likewise, reading $B^{n-1}$ or $B^n$ produces $\{q, q+1, \ldots, m-1\} \cup \{m, \ldots, m+p-1\}$.
  \end{proof}

\begin{theorem}
  \label{chainBoundedBound}
  Let $\mathcal{A}$ be a $1$-bounded chain automaton with $n$ states.
  The number of states in the minimal DFA of $W(\mathcal{L}(\mathcal{A}))$ is $O(n^{1/5} e^{4 \pi \sqrt{\frac{n}{3}}})$.
\end{theorem}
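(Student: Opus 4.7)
The plan is to associate to each $w \in \{A,B\}^*$ a canonical form parametrized by a pair of integer partitions of total size $O(n)$, and then to count these parameters via Hardy--Ramanujan. By Lemma~\ref{chainEquivWords}\ref{chain4} we may first replace $w$ by an equivalent word in which every maximal $A$-run and every maximal $B$-run has length at most $n-1$. Writing $w$ as an alternating sequence of runs with lengths $a_1, b_1, a_2, b_2, \ldots$, iterated applications of Lemma~\ref{chainEquivWords}\ref{chain2} let us freely redistribute $B$-mass on the two sides of an interior $A$-run of length $k$ as long as both adjacent $B$-runs remain of length at least $k$, and Lemma~\ref{chainEquivWords}\ref{chain3} is the symmetric statement for $A$-mass around $B$-runs. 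Using these moves to sort the sequences of $A$-run and $B$-run lengths into monotone order, one obtains a canonical form determined by the multiset $\lambda$ of $A$-run lengths, the multiset $\mu$ of $B$-run lengths, and a constant amount of boundary data (such as the first and last letter of $w$).

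For the size bound, each run has length at most $n-1$ by Lemma~\ref{chainEquivWords}\ref{chain4}, and a separate argument exploiting Lemma~\ref{chainEquivWords}\ref{chain1} (which lets $BA$ dominate $AB$) together with further absorptions via Lemma~\ref{chainEquivWords}\ref{chain4} should show that the number of runs can be bounded by $O(n)$, so that in fact $|\lambda|, |\mu| \le 2n$ suffices. The number of partition pairs with $|\lambda|, |\mu| \le 2n$ is at most $\bigl(\sum_{k=0}^{2n} p(k)\bigr)^{2}$, and by the Hardy--Ramanujan estimate $p(k) \sim \tfrac{1}{4k\sqrt{3}}\, e^{\pi\sqrt{2k/3}}$ combined with $\sum_{k \le N} p(k) = O(\sqrt{N}\, p(N))$ this quantity is of order $O(n^{-1} e^{4\pi\sqrt{n/3}})$, well within the target $O(n^{1/5} e^{4\pi\sqrt{n/3}})$.

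The principal obstacle is producing the canonical form: the moves Lemma~\ref{chainEquivWords}\ref{chain2} and~Lemma~\ref{chainEquivWords}\ref{chain3} are strictly local, and their preconditions---both flanking runs of length at least $k$---can fail, so a naive sorting sweep does not suffice. The argument presumably proceeds by induction on word length: one peels off the smallest interior run and, if necessary, first shortens a neighboring run via Lemma~\ref{chainEquivWords}\ref{chain4} to make the commutation move applicable, before invoking the induction hypothesis on the residual shorter word. Bounding the number of runs (rather than merely their total length) is a further subtlety, since words of the form $(AB)^k$ consist of many very short runs whose individual run lengths contribute little mass to $\lambda$ and $\mu$; handling these thin words seems to rely on repeated use of Lemma~\ref{chainEquivWords}\ref{chain1} to collapse long alternating blocks.
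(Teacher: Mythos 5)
Your counting strategy diverges from the paper's in a way that opens two genuine gaps. First, the normal form you propose---a word determined by the \emph{independently sorted} multisets $\lambda$ of $A$-run lengths and $\mu$ of $B$-run lengths---is not reachable by the available moves. Lemma~\ref{chainEquivWords}\ref{chain2} and~\ref{chain3} only transfer mass between the two same-letter runs flanking a run of length $k$ when \emph{both} flanking runs already have length at least $k$; a word such as $A^5 B^3 A^1$ admits no such move at all, yet its $A$-run sequence $(5,1)$ is not sorted, so it is not equivalent (by any argument you have given) to anything in your normal-form set. An upper bound on the number of equivalence classes requires every word to be equivalent to \emph{some} normal form, so this is fatal as stated. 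The paper proves only the weaker normal form that the single interleaved sequence $n_1,\ldots,n_{2r}$ of run lengths is unimodal (first nondecreasing, then nonincreasing)---note that $(5,3,1)$ is unimodal---and then counts unimodal compositions directly via Auluck's estimate $Q(m)=\Theta(m^{-4/5}e^{2\pi\sqrt{m/3}})$, rather than pairs of partitions.

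Second, and independently, your bound $|\lambda|\le 2n$ on the total $A$-mass does not follow from ``the number of runs is $O(n)$'': with at most $n$ runs each of length up to $n-1$ (the best Lemma~\ref{chainEquivWords}\ref{chain4} gives per run), the total is only $O(n^2)$, and a word like $(A^{n-1}B)^{n-1}$ really does carry $\Theta(n^2)$ many $A$'s with no single run reducible by~\ref{chain4}. Replacing $2n$ by $cn^2$ in your computation yields $p(cn^2)\approx e^{\Theta(n)}$, i.e.\ an exponential rather than an $e^{O(\sqrt n)}$ bound. The paper extracts the linear mass bound precisely from unimodality: in a unimodal interleaved sequence every interior $A$-run is dominated by an adjacent $B$-run, the $B$-runs sum to less than $n$, and the two boundary $A$-runs contribute less than $n$ each, giving $\sum_i n_i<4n$. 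You would need an analogous structural fact tying $|\lambda|$ to $|\mu|$ before any partition-counting argument can close.
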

\begin{proof}
  Since $\mathcal{A}$ does not accept any word with $n$ or more $1$-symbols, $W(\mathcal{L}(\mathcal{A}))$ contains no word with $n$ or more $B$-symbols.
  The equivalences $B^k A^k B^{k+1} \equiv B^{k+1} A^k B^k$ and $A^{k+1} B^k A^k \equiv A^k B^k A^{k+1}$ for $k \geq 0$
  that follow from Lemma~\ref{chainEquivWords}
  allow us to rewrite every word of $W(\mathcal{L}(\mathcal{A}))$ in the form $A^{n_1} B^{n_2} A^{n_3} B^{n_4} \cdots A^{n_{2r-1}} B^{n_{2r}}$ where
  the sequence $n_1, \ldots, n_{2r}$ is first nondecreasing and then nonincreasing, and $n_2 + n_4 + \cdots + n_{2r} < n$.
  With Lemma~\ref{chainEquivWords}\ref{chain4} we can also guarantee $n_1, n_3, \ldots, n_{2r-1} < n$, so that $\sum_i n_i < 4 n$.
  In \cite{auluck1951some}, Auluck showed that the number $Q(m)$ of partitions $m = n_1 + \ldots n_r$ of an integer $m$ that are first nondecreasing and then nonincreasing is $\Theta(m^{-4/5} e^{2 \pi \sqrt{m/3}})$.
  Of course, $v \equiv w$ implies $v \sim w$.
  Thus the number of non-right-equivalent words for $W(\mathcal{L}(\mathcal{A}))$, and the number of states in its minimal DFA, is at most $1 + \sum_{m = 0}^{4 n-1} Q(m) = O(n^{1/5} e^{4 \pi \sqrt{\frac{n}{3}}})$.
\end{proof}

\section{Case study: exact number of $1$-symbols}

  In the previous section we proved a bound for the complexity of the winning set of a bounded permutation invariant language. Here we study 
  a particular case, the language of words with exactly $n$ ones, or $L= (0^*1)^n 0^*$.
  We not only compute the number of states in the minimal automaton (which is cubic in $n$), but also describe the winning set.
  Throughout the section $\mathcal{A}$ is the minimal automaton for $L$, described in Figure \ref{01s01s}.
  For $S \subseteq Q$, we denote $\overline{S} =  \{\min(S), \min(S) + 1, \ldots, \max(S)\}$, and for any game state $\gst{S}$ of $W(\mathcal{A})$, denote $\overline{\gst{S}} = \{ \overline{S} \;|\; S \in \gst{S}\}$.

  \begin{figure}[htp]
    
    \begin{center}
      \begin{tikzpicture}
	[nonaccept/.style={circle,draw,inner sep=0cm,minimum size=0.85cm},
	 accept/.style={circle,draw,double,inner sep=0cm,minimum size=0.85cm},
	 every edge/.style={draw,->,>=stealth},
	 scale=0.5]
		\node (0) at (-9, 0) {};
		\node [nonaccept] (5) at (-7, 0) {$0$};
		\node [nonaccept] (6) at (-3, 0) {$1$};
		\node (7) at (0, 0) {$\dots$};
		\node [nonaccept] (8) at (3, 0) {$n-1$};
		\node [accept] (9) at (7, 0) {$n$};
		\node [nonaccept] (11) at (11, 0) {$n+1$};
		\draw (0) edge (5);
		\draw (5) edge (6);
		\draw (6) edge (7);
		\draw (7) edge (8);
		\foreach \k in {5,6,8,9,11}{
			 \draw [in=55, out=125, loop, looseness=4] (\k) edge (\k);
		}
		\draw [in=40, out=140, loop] (11) edge (11);
		\draw (8) edge (9);
		\draw (9) edge (11);
\end{tikzpicture}

    \end{center}
    
    \caption{\label{01s01s} The minimal DFA for $L= (0^*1)^n0^*$.}
  \end{figure}
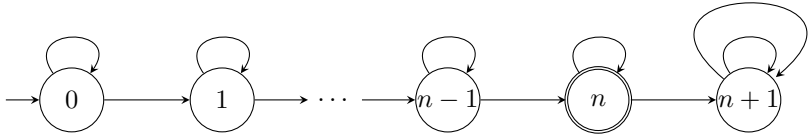

  \begin{lemma} \label{lem:stateIsInterval}
    Each game state $\gst{S}$ of $W(\mathcal{A})$ is equivalent to $\overline{\gst{S}}$.
  \end{lemma}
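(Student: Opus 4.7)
The plan is to reduce to singleton game states and then induct on the test word length. By Lemma~\ref{basicProperties}\ref{unionEquiv}, $\delta_W$ distributes over the union of game states, and a game state accepts $w$ iff some singleton $\{S\}$ with $S \in \gst{S}$ does. Writing $\gst{S} = \bigcup_{S \in \gst{S}} \{S\}$ and $\overline{\gst{S}} = \bigcup_{S \in \gst{S}} \{\overline{S}\}$, it therefore suffices to prove $\{S\} \sim \{\overline{S}\}$ for every $S \subseteq Q$. One direction is free: by a routine monotonicity side-induction, $S \subseteq \overline{S}$ implies $\mathcal{L}(\{\overline{S}\}) \subseteq \mathcal{L}(\{S\})$, since enlarging a live set only adds obligations for Alice.

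For the converse direction I proceed by induction on $|w|$. The base $w = \lambda$ is trivial because $F = \{n\}$ forces any accepting singleton $\{S\}$ to have $S = \overline{S}$. If $w = Bw'$, then $\delta_W(\{S\}, B) = \{S \cup \delta_1(S)\}$ where $\delta_1(q) = \min(q+1, n+1)$; a direct calculation using the monotonicity of $\delta_1$ gives $\overline{S \cup \delta_1(S)} = \overline{S} \cup \delta_1(\overline{S})$, so the inductive hypothesis applied to $w'$ closes the case.

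The crucial case is $w = Aw'$. Given a strategy $f : S \to \{0, 1\}$ such that $\{T_f\}$ accepts $w'$, where $T_f = \{\delta(q, f(q)) : q \in S\}$, I extend $f$ to $g : \overline{S} \to \{0, 1\}$ by setting $g|_S = f$ and $g(q) = 0$ for $q \in \overline{S} \setminus S$, obtaining $T_g = T_f \cup (\overline{S} \setminus S)$. A short calculation using $\delta(q, f(q)) \geq q$ and the monotonicity of $\delta$ shows that every new state $q \in \overline{S} \setminus S$ satisfies $\min T_f \leq q \leq \max T_f$, so $T_f \subseteq T_g \subseteq \overline{T_f}$ and hence $\overline{T_g} = \overline{T_f}$. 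Two applications of the induction hypothesis to $w'$ then give $\{T_g\} \sim \{\overline{T_g}\} = \{\overline{T_f}\} \sim \{T_f\}$, so $\{T_g\}$ accepts $w'$ and $\{\overline{S}\}$ accepts $Aw'$. The main obstacle is exactly this extension step: we must lift any Alice strategy on $S$ to one on the strictly larger $\overline{S}$ without enlarging the interval hull of the successor set, which works here because the self-loop $\delta(q, 0) = q$ at every non-sink chain state lets us ``park'' the new elements inside $\overline{T_f}$, and the induction can then collapse $T_f$ and $T_g$ through their common interval hull.
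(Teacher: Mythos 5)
Your proof is correct and follows essentially the same route as the paper's: an induction on the test word over the cases $\lambda$, $Bw'$, $Aw'$, where the $B$-case is an exact identity of interval hulls and the $A$-case rests on the fact that the hull of a successor set is controlled by the images of the endpoints. The only organizational differences are that you reduce to singletons and dispatch one inclusion up front via downward monotonicity (a superset of a live set is harder for Alice), whereas the paper carries both directions through a single ``final iff final'' induction on whole game states; your explicit strategy extension $g$, which parks the new states of $\overline{S}\setminus S$ on their $0$-self-loops so that $\overline{T_g}=\overline{T_f}$, is a precise instantiation of the paper's remark that one can ``make the same choices for the leftmost and rightmost elements.''
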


    \begin{proof}
    We prove by induction that for every
    $w\in \{A,B\}^*$, the game state $\delta_W(\gst{S},w)$ is final iff $\delta_W ( \overline{\gst{S}},w)$ is.
    \begin{itemize}
        \item  For the empty word $\lambda$, $\delta_W (\gst{S}, \lambda) = \gst{S}$ is final
        iff $\{n\} \in \gst{S}$
        iff $\{n\} \in \overline{\gst{S}} = \delta_W(\overline{\gst{S}}, \lambda)$.
        \item For $w= B v$,
          it's easy to see that $ \delta_W (\overline{\gst{S}}, B) = \overline{ \delta_W (\gst{S},B)}$.
          By the induction hypothesis, $ \delta_W (\gst{S}, B v) = \delta(\delta_W (\gst{S}, B), v)$ is final iff
          $ \delta_W (\overline{\delta_W (\gst{S}, B)}, v) = \delta_W (\delta_W (\overline{\gst{S}}, B), v) = \delta_W (\overline{\gst{S}}, B v)$ is.
        \item For $w= A v$,
        we have
         $ \overline{\delta_W (\overline{\gst{S}}, A)} = \overline{ \delta_W (\gst{S}, A)}$: 
        For each set $S \in \gst{S}$, we focus on a set $R$ that can be obtained
        by a combination of choices by reading one $A$.
        The set $\overline{R}$ is only defined by the leftmost and rightmost elements in $R$.
        From $\overline{S}$, by making the same choices for the leftmost and rightmost
        elements we can produce the same leftmost and rightmost elements as in $R$ to obtain a set equivalent to $\overline{R}$.

        By the induction hypothesis, the game state $\delta_W (\gst{S}, A v) = \delta_W (\delta_W (\gst{S}, A), v)$ is final iff
        $\delta_W (\overline{\delta_W (\gst{S}, A)}, v) = \delta_W (\overline{\delta_W (\overline{\gst{S}}, A)}, v)$ is.
        Again by the induction hypothesis, this is equivalent to $\delta_W (\delta_W (\overline{\gst{S}}, A), v) = \delta_W(\overline{\gst{S}}, A v)$ being final.
    \end{itemize}
\end{proof}

  \begin{lemma} \label{lem:formOfReachables}
    Let $T$ be the set of integer triples $(i, \ell, N)$ with $0 \leq i \leq n$, $1 \leq \ell \leq n-i+1$ and $1 \leq N \leq n-i-\ell+2$.
    For $(i, \ell, N) \in T$, let
    \[ \gst{S}(i, \ell, N) :=
      \{
      \{i, \ldots, i+\ell-1\},
      \{i+1, \ldots, i+\ell\},
      \ldots,
      \{i+N-1, \ldots, i+\ell+N-2\} \}.
    \]
    \begin{roster}
    \item \label{form1}
      Each reachable game state of $W(\mathcal{A})$ is equivalent to some $\gst{S}(i, \ell, N)$ for $(i, \ell, N) \in T$, or to $\emptyset$.
    \item \label{form2} The game states $\gst{S}(i,\ell,N)$ for $(i, \ell, N) \in T$ are nonequivalent.
    \item \label{form3} Every $\gst{S}(i,\ell,N)$ for $(i, \ell, N) \in T$ is equivalent to some reachable game state.
    \end{roster}
  \end{lemma}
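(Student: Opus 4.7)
My plan is to derive explicit transition rules for game states $\gst{S}(i, \ell, N)$ under reading $A$ and $B$, and to use these rules for all three parts. Applying Lemma~\ref{lem:stateIsInterval} to collapse each candidate image to an interval, and Lemma~\ref{basicProperties}\ref{subsetEquiv}--\ref{finalSinkEquiv} to discard supersets and intervals containing the sink $n+1$, I expect the rules to be as follows: reading $B$ sends $\gst{S}(i, \ell, N)$ to $\gst{S}(i, \ell+1, N)$ if $i+\ell+N-2 < n$, to $\gst{S}(i, \ell+1, N-1)$ if $i+\ell+N-2 = n$ and $N \geq 2$, and to $\emptyset$ if $N = 1$ and $i+\ell-1 = n$; reading $A$ sends it to $\gst{S}(i+1, \ell-1, N)$ when $\ell \geq 2$, to $\gst{S}(i, 1, N+1)$ when $\ell = 1$ and $i+N-1 < n$, and to itself when $\ell = 1$ and $i+N-1 = n$. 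The per-interval $A$-analysis enumerates all choice functions $f : \{i+j, \ldots, i+j+\ell-1\} \to \{0,1\}$, takes interval hulls, and observes that the minimum-by-inclusion hull is always $\{i+j+1, \ldots, i+j+\ell-1\}$ (using Lemma~\ref{basicProperties}\ref{removeStateEquiv} when some state would transition to $n+1$); Lemma~\ref{basicProperties}\ref{unionEquiv} then combines the contributions over $j$, and a routine check of the inequalities defining $T$ confirms closure.

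Part~\ref{form1} then follows by induction on word length from $\{\{q_0\}\} = \gst{S}(0, 1, 1) \in T$ using this closure. For part~\ref{form3}, the word $A^{N-1} B^{\ell+i-1} A^{i}$ reaches $\gst{S}(i, \ell, N)$ from $\gst{S}(0, 1, 1)$, passing through $\gst{S}(0, 1, N)$ and $\gst{S}(0, \ell+i, N)$, with every intermediate triple lying in $T$.

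For part~\ref{form2}, the plan is to iterate the transition rules to determine exactly when $B^b A^a$ is accepted from $\gst{S}(i, \ell, N)$. Setting $M := i+\ell+N-2$, the accepting region in the $(b, a)$-plane turns out to be the union of a ``plateau'' $\{(b, a) : 0 \leq b \leq n-M,\ a \geq \ell + n - M - 1\}$ and a ``wedge'' $\{(b, a) : n-M < b \leq n-i-\ell+1,\ a \geq \ell + b - 1\}$. The plateau height $n-i-N+1$, the plateau right-endpoint $n-M$, and the wedge right-endpoint $n-i-\ell+1$ respectively recover $i+N$, $M$, and $i+\ell$; since any two distinct triples in $T$ must differ in at least one of these quantities, we split into three cases. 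The distinguishing word is $A^{n-i_1-N_1+1}$ if plateau heights differ (WLOG smaller from state~1); $B^{n-M_1} A^{n-i_1-N_1+1}$ if only $M$ differs (WLOG $M_1 < M_2$; state~2 lands past its plateau and either requires a strictly larger $a$ on its wedge or is already empty); and $B^{n-i_1-\ell_1+1} A^{n-i_1}$ if only $i+\ell$ differs (WLOG $i_1 < i_2$; state~1 reaches the tip of its wedge while state~2 becomes empty).

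The main obstacle is the $A$-transition analysis, where one must enumerate all $2^\ell$ maps $f$, correctly take interval hulls, and treat the boundary where some $q \in S$ equals $n$ and $f(q) = 1$ sends it to the non-accepting sink $n+1$. The distinguishing argument also needs care in Case~2, where one must verify that after $B^{n-M_1}$ state~2 either requires a strictly larger $a$ in its wedge (because $\ell_1 - M_1 = \ell_2 - M_2$ with $M_1 < M_2$ forces $\ell_1 < \ell_2$) or is outright empty, treating both subcases separately.
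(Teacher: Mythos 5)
Your proposal is correct, and for parts~\ref{form1} and~\ref{form3} it follows essentially the same route as the paper: both derive the single-letter transition rules on the ``interval of intervals'' normal forms $\gst{S}(i,\ell,N)$ (reading $B$ grows $\ell$, reading $A$ either shifts $i$ and shrinks $\ell$ or grows $N$, with sets touching the sink $n+1$ discarded), conclude~\ref{form1} by induction from $\gst{S}(0,1,1)$, and prove~\ref{form3} by exhibiting an explicit reaching word (yours is $A^{N-1}B^{\ell+i-1}A^{i}$, the paper's is $(BA)^{i}A^{N-1}B^{\ell-1}$; both work). Where you genuinely diverge is part~\ref{form2}: the paper separates states in two stages, first by the interval length $\ell$ using words $A^{\ell}(BA)^{k}$ and then, for equal $\ell$, by reducing to singleton configurations with $A^{\ell-1}$ and shifting with $(BA)^{k}$; you instead compute the full acceptance region of the two-block words $B^{b}A^{a}$ as a plateau-plus-wedge in the $(b,a)$-plane and observe that its three extremal parameters recover $i+N$, $i+\ell+N$ and $i+\ell$, hence the triple. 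I checked your formulas: the plateau threshold is indeed $\max(\ell+b-1,\,n-i-N+1)=n-i-N+1$ for $b\le n-M$, the wedge threshold is $\ell+b-1$ for $n-M<b\le n-i-\ell+1$, and your three distinguishing words (including the subtlety that equal plateau heights together with $M_1<M_2$ force $\ell_1<\ell_2$) are all valid. Your version of~\ref{form2} is more computational but arguably more robust, since it produces a complete invariant (the threshold function $b\mapsto\min\{a\}$) rather than an ad hoc pair of separating word families; the paper's version is shorter but leaves more case-checking implicit (in particular, its claim that $A^{\ell}(BA)^{k}$ rejects from every state with larger interval length needs the same kind of boundary analysis you carry out explicitly).
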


\begin{proof}
  We first make the following remarks which follow from Lemma~\ref{lem:stateIsInterval}.
  Let $S = \{ i, \ldots, j\}$. If $i=j$, then $\delta_W (\{S\}, A) \sim \{ \{i\}, \{i+1\} \}$,
  otherwise $\delta_W (\{S\}, A) \sim \{ \{i+1, \dots, j\} \}$.
  In both cases we also have $\delta_W (\{S\}, B) \sim \{ \{i, \dots, j+1\} \}$.
  
  We prove~\ref{form1}.
  From $\gst{S}(i,\ell,N)$, if we read $B$, we have $\gst{S}(i,\ell+1,N)$. If we read $A$ and $\ell=1$, we get $\gst{S}(i,1,N+1)$. If we read $A$ and $\ell > 1$, we obtain $\gst{S}(i+1,\ell-1,N)$.
  The claim follows since the initial game state is $\gst{S}(0, 1, 1)$, and if a set in the game state contains the state $n+1$, it is equivalent to $\emptyset$.
  
  For~\ref{form2}, we first distinguish game states with different lengths $\ell<\ell'$.
  By reading $A^\ell (B A)^k$ for a suitable $k \geq 0$ we reach a final game state from $\gst{S}(i, \ell, N)$ but not from any $\gst{S}(i', \ell', N')$ with $\ell < \ell'$.
  
  Now we suppose we have game states $\gst{S}(i,\ell,N)$ and $\gst{S}(i',\ell,N')$.
  By reading $A^{\ell-1}$, we get respectively $\gst{S}(i+\ell-1,1,N)$
  and $\gst{S}(i'+\ell-1,1,N')$, which are intervals of singleton sets. Since 
  these distributions of singleton sets are different, we can read $(B A)^k$ 
  for a suitable $k \geq 0$ to obtain a final game state from one of them but not the other.
  
  For~\ref{form3}, reading $(B A)^i A^{N-1} B^{\ell-1}$ leads to a game state equivalent to $\gst{S}(i,\ell,N)$.
\end{proof}

The game state $\gst{S}(i,\ell,N)$ is an interval of intervals, where $i$ is the leftmost position of the first interval, $\ell$ is the common length of the intervals, and $N$ is the number of intervals.
It is easy to see that reading $B A$ from $\gst{S}(i,\ell,N)$ produces a game state equivalent to $\gst{S}(i+1,\ell,N)$.
In other ords, $BA$ can be use to ``make the game state go forward'' in the chain automaton.

\begin{proposition} \label{lem:formOfH}
  The minimal automaton for $W(L)$ has $\frac{n^3}{6}+n^2+\frac{11n}{6}+2$ states.
\end{proposition}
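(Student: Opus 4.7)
The plan is almost entirely a matter of counting, since Lemma~\ref{lem:formOfReachables} has already identified the reachable equivalence classes. By parts \ref{form1}, \ref{form2}, and \ref{form3} of that lemma, every state of the minimal DFA for $W(L)$ is equivalent to either the empty game state $\emptyset$ or some $\gst{S}(i,\ell,N)$ with $(i,\ell,N) \in T$, and the $\gst{S}(i,\ell,N)$ are pairwise nonequivalent and all reachable.

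First I would confirm that $\emptyset$ contributes exactly one additional state. Distinctness from every $\gst{S}(i,\ell,N)$ is immediate since each of the latter is nonempty. For reachability, starting from $\gst{S}(0,1,1) = \{\{0\}\}$ and reading $B^{n+1}$ produces the game state $\{\{0, 1, \ldots, n+1\}\}$, which contains the non-final sink state $n+1$ of $\mathcal{A}$; Lemma~\ref{basicProperties}\ref{removeStateEquiv} then collapses it to $\emptyset$.

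It remains to evaluate $|T|+1$. Substituting $m = n-i$ and $k = \ell-1$ in the definition of $T$,
\[
|T| = \sum_{i=0}^{n}\sum_{\ell=1}^{n-i+1}(n-i-\ell+2) = \sum_{m=0}^{n}\sum_{k=0}^{m}(m-k+1) = \sum_{m=0}^{n}\binom{m+2}{2} = \binom{n+3}{3},
\]
the last equality being the hockey-stick identity. Expanding, $|T| = \frac{(n+1)(n+2)(n+3)}{6} = \frac{n^3+6n^2+11n+6}{6}$, and adding $1$ for the sink gives $|T|+1 = \frac{n^3+6n^2+11n+12}{6} = \frac{n^3}{6}+n^2+\frac{11n}{6}+2$, as claimed.

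There is no real obstacle here; the structural content lies entirely in Lemma~\ref{lem:formOfReachables}, and this proposition is essentially a counting corollary together with the easy observation that the sink $\emptyset$ is reachable and not among the $\gst{S}(i,\ell,N)$.
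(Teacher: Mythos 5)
Your proposal is correct and follows the same route as the paper: count the triples in $T$ (the paper simply states the ranges of $i,\ell,N$ and the total, while you carry out the sum explicitly via the hockey-stick identity) and add one state for the sink $\emptyset$. The only slightly glib step is the claim that distinctness of $\emptyset$ from the $\gst{S}(i,\ell,N)$ is ``immediate since each of the latter is nonempty''---nonemptiness as a set of sets does not by itself rule out equivalence to the always-rejecting state---but the proof of Lemma~\ref{lem:formOfReachables}\ref{form2} already exhibits an accepted word $A^\ell(BA)^k$ for each $\gst{S}(i,\ell,N)$, which closes that gap.
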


\begin{proof}
  The parameter $i$ can vary from $0$ to $n$, $\ell$ from $1$ to $n-i+1$, and $N$ from $1$ to $n-i-\ell + 2$.
  % f:= simplify(sum( sum(  sum(1,l,1, n-I-N + 2) , N, 1, n-I + 1  ) , I , 0 ,n ))
  We need to consider one more state for the sink state $\emptyset$. In total, there are $\frac{n^3}{6}+n^2+\frac{11n}{6}+2$ states.
\end{proof}

\begin{proposition}
  $W(L)$ is exactly the set of words $w \in \{A, B\}^*$ such that $|w|_A \geq n$, $|w|_B \leq n$, and every suffix $v$ of $w$ satisfies $|v|_A \geq |v|_B$.
\end{proposition}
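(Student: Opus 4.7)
The plan is to trace the game state reached in $W(\mathcal{A})$ after reading a word $w \in \{A,B\}^*$, letter by letter. By Lemma~\ref{lem:formOfReachables}\ref{form1}, this game state is always equivalent to some $\gst{S}(i,\ell,N)$ or to $\emptyset$, and the transition rules derived in the proof of that lemma describe how $(i,\ell,N)$ evolves, subject to the escape convention that any set containing the non-accepting sink $n{+}1$ collapses to $\emptyset$ by Lemma~\ref{basicProperties}\ref{removeStateEquiv}. I would also introduce the reflected walk $d(\lambda) = 0$, $d(wA) = \max(d(w) - 1,\, 0)$, $d(wB) = d(w) + 1$; a standard identity gives $d(w) = \max\{|v|_B - |v|_A : v \text{ is a suffix of } w\}$, so condition (C3) of the proposition, that every suffix $v$ of $w$ satisfies $|v|_A \geq |v|_B$, is equivalent to $d(w) = 0$.

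The main step is an induction on $|w|$ proving that, whenever $|w|_B \leq n$, the game state after reading $w$ is $\gst{S}(i,\ell,N)$ with
\[
  \ell = d(w) + 1,\qquad i = |w|_B - d(w),\qquad i + \ell + N - 2 = \min\bigl(|w|_A + d(w),\ n\bigr),
\]
and that the game state is $\emptyset$ when $|w|_B > n$. The first two invariants follow directly from the three transition rules of Lemma~\ref{lem:formOfReachables}. The third invariant tracks the rightmost element of the rightmost interval in $\gst{S}(i,\ell,N)$: its ``natural'' value $|w|_A + d(w)$ increments when $B$ is read or when $A$ is read with $d(w) = 0$, stays fixed when $A$ is read with $d(w) > 0$, and is clipped to $n$ exactly when an escape occurs.

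Given these invariants, the game state is accepting iff $\ell = 1$ and $i \leq n \leq i + \ell + N - 2$, which after substitution becomes $d(w) = 0$, $|w|_B \leq n$, and $\min(|w|_A + d(w),\, n) \geq n$. Under $d(w) = 0$ the last inequality reduces to $|w|_A \geq n$, giving exactly conditions (C1) $|w|_A \geq n$, (C2) $|w|_B \leq n$, and (C3) of the proposition.

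The main obstacle is the escape bookkeeping for the third invariant: the clipping of $N$ must be verified in all four transition cases (reading $A$ with $\ell = 1$, reading $A$ with $\ell > 1$, reading $B$ without escape, and reading $B$ with escape), together with the observation that an $A$-read with $\ell > 1$ leaves the rightmost element fixed and hence cannot trigger a new escape. Once this routine case analysis is in place, the equivalence with (C1), (C2), (C3) is just substitution into the invariants.
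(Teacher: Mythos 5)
Your proof is correct, but it takes a genuinely different route from the paper's. The paper splits the statement into two implications and treats them with different tools: for the forward direction it argues game-theoretically that Alice needs at least $n$ moves (against Bob playing only $0$s) and can tolerate at most $n$ Bob-moves (against Bob playing only $1$s), and derives the suffix condition from the fact that only game states with $\ell = 1$ are accepting while $A$ and $B$ respectively decrement and increment $\ell$; for the converse it observes that the factor $BA$ merely shifts the whole game state forward along the chain, moves all $BA$ occurrences to the front to reach the normal form $(BA)^{k_1}A^{k_2}$, and evaluates that word directly. You instead establish a single closed-form invariant expressing the reachable game state $\gst{S}(i,\ell,N)$ in terms of $|w|_A$, $|w|_B$ and the reflected-walk statistic $d(w)$, and read off all three conditions from the characterization of accepting states. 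Your $d(w)$ is exactly $\ell - 1$, so the two arguments agree on where the suffix condition comes from; what your approach buys is that both inclusions fall out of one induction, and the sink-state clipping of $N$ (left implicit in the paper's use of Lemma~\ref{lem:formOfReachables}\ref{form1}) is made explicit, at the cost of the four-case verification you flag. I checked those cases and they go through: in particular an $A$-step with $\ell > 1$ leaves all right endpoints fixed and so cannot create a set containing $n+1$, the state collapses to $\emptyset$ precisely at the $B$ that makes $|w|_B = n+1$, and under $|w|_B \leq n$ the condition $i \leq n$ is automatic, so acceptance reduces exactly to $d(w)=0$, $|w|_B \leq n$ and $|w|_A \geq n$.
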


\begin{proof}
    Every word of $W(L)$ has at least $n$ letters $A$ to let Alice win against Bob when he only plays $0$s.
    Similary it must have a most $n$ letters $B$ to let Alice wins when Bob only plays $1$s.
    Only game states of the form $\gst{S}(i,1,N)$ can be accepting.
    Since doing $A$ decreases the parameter $\ell$ by one,
    and $B$ increases $\ell$ by one, words of $W(L)$ must have after each $B$ an associated $A$ somewhere in the word.
    This is equivalent to the suffix condition.

    Conversely, if a word $w$ has this property, since $B A$ makes the whole game state move forward along the chain automaton,
    we can move all occurrences of $B A$ to the beginning of the word to obtain $w'$, which is in $W(L)$ iff $w$ is.
    Then $w' = (B A)^{k_1} A^{k_2} B^{k_3}$ for some $k_1, k_2, k_3 \geq 0$. Because of the suffix condition, there are no $B$ at the end of $w'$, so $k_3 = 0$.
    We also have $|w'|_A = |w|_A \geq n$ and $|w'|_B = |w|_B \leq n$, so $k_1+k_2 \geq n$ and $k_1 \leq n$.
    This means $\delta_W(\gst{S}(0,1,1), w') = \gst{S}(k_1, 1, k_2)$ so we have a game state with singletons and one of them is at position $n$, hence it is final and $w', w \in W(L)$.
\end{proof}

\section{A context-free language}
% POK

In this section we prove that the winning set operator does not in general preserve context-free languages by studying the winning set of the Dyck langage. For better readability, $0$ stands for the opening parenthesis $($ and $1$ stands for the closing parenthesis $)$.

\begin{proposition}
  \label{DyckWinningNotCFL}  
  Denote by $D$ the Dyck language.
  The winning shift $W(D)$ is not context-free.
\end{proposition}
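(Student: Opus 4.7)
The plan is to establish non-context-freeness of $W(D)$ by exhibiting a regular language $R$ for which $W(D)\cap R$ fails the pumping lemma for context-free languages; since the context-free class is closed under intersection with regular languages, this suffices. I take $R = A^*B^*A^*$, and the first step is to determine $W(D)\cap R$ explicitly. Tracking the balance $h = \#0 - \#1$ as the game is played out under turn order $A^aB^bA^c$, one sees that Alice's unique optimal opening is to play $a$ zeros (every other opening produces a strictly smaller balance and leaves her more vulnerable during Bob's turn). After Alice's opening the balance equals $a$, so during Bob's $b$ adversarial moves she needs $a \geq b$, or else Bob drives the balance negative by playing only $1$'s. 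Knowing Bob's play, Alice then needs to return the resulting balance $h \in \{a-b, a-b+2, \ldots, a+b\}$ to $0$ in exactly $c$ moves without going negative, which is possible iff $c \geq h$ and $c \equiv h \pmod 2$. The worst case is $h = a + b$, giving the clean characterization:
\[
A^aB^bA^c \in W(D) \;\Longleftrightarrow\; a \geq b,\; c \geq a+b, \text{ and } a+b+c \text{ is even.}
\]

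The second step is to apply the context-free pumping lemma to the word $w_N = A^NB^NA^{2N}$, which satisfies all three conditions, for $N$ exceeding the pumping constant $p$. In any factorization $w_N = uvxyz$ with $|vxy| \leq p$ and $|vy| \geq 1$, each of $v$ and $y$ must be a uniform block of a single letter, since otherwise pumping would leave the regular skeleton $A^*B^*A^*$. Moreover, $vxy$ lies in a window of length less than $N$, so it meets at most one of the two block boundaries. A finite case analysis then shows some pumping breaks the characterization: if $vxy$ sits inside the first $A^N$ block, pumping down falsifies $a \geq b$; inside $B^N$, pumping up falsifies $a \geq b$; inside the second $A^{2N}$, pumping down falsifies $c \geq a+b$; straddling the $A^N|B^N$ boundary with $v \in A^*$ and $y \in B^*$ of total length $\geq 1$, pumping up gives $a' + b' = 2N + |v| + |y| > 2N = c$, falsifying $c \geq a+b$; and straddling the $B^N|A^{2N}$ boundary with $v \in B^*$ and $y \in A^*$, pumping up falsifies $a \geq b$ when $v \neq \lambda$, while pumping down falsifies $c \geq a+b$ when $v = \lambda$ (forcing $y \neq \lambda$).

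These contradictions prove that $W(D)\cap R$ is not context-free, and hence neither is $W(D)$. The main technical content is really the characterization of $W(D)\cap A^*B^*A^*$; once that is available, the pumping step is routine case-work tracking how each pump alters the three constraints. The heart of the non-context-freeness is the simultaneous comparison of the three counts $a$, $b$, and $c$ in the two inequalities $a \geq b$ and $c \geq a+b$, which a pushdown stack cannot enforce jointly.
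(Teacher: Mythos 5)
Your overall strategy is the same as the paper's (intersect $W(D)$ with a regular language of the shape $A^*B^*A^*$, characterize the intersection explicitly, then pump), but your characterization of $W(D)\cap A^*B^*A^*$ is wrong, and the error is in the game analysis, which is the real content of the proof. You assert that Alice's unique optimal opening is to play $a$ zeros because any other opening ``leaves her more vulnerable during Bob's turn.'' That is backwards as an optimization: raising the balance $h_0$ after her first phase helps only with the constraint $h_0\geq b$ (surviving Bob playing all $1$'s), but it hurts her in the final phase, where the worst case she must unwind is $h_0+b$, requiring $c\geq h_0+b$. Her optimal play is the \emph{smallest} admissible $h_0\geq b$ of the right parity (e.g.\ play $(a-b)/2$ copies of $01$ followed by $b$ zeros), which yields the correct characterization
\[
A^aB^bA^c\in W(D)\iff a\geq b,\; c\geq 2b,\; a+b+c \text{ even},
\]
matching the paper's condition $i\geq j$, $k\geq 2j$ on $A^{2i}B^{2j}A^{2k}$. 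A concrete counterexample to your condition $c\geq a+b$ is $A^4B^2A^4$: Alice opens with $0100$ (balance $2$), survives Bob's two moves, and can close out any resulting balance in $\{0,2,4\}$ within four moves, so the word is in $W(D)$ even though $c=4<6=a+b$. Since your pumping cases derive contradictions by showing pumped words violate a condition that is not actually equivalent to membership, the argument as written does not establish anything.

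The good news is that the gap is local: your witness word $A^NB^NA^{2N}$ does satisfy the corrected characterization ($c=2N=2b$), and rerunning your case analysis against $a\geq b$ and $c\geq 2b$ (instead of $c\geq a+b$) still closes every case --- pumping down in the first block or up in the middle block breaks $a\geq b$, pumping down in the last block breaks $c\geq 2b$, and the two straddling cases are handled by whichever of these two inequalities the nonempty pumped piece disturbs. Note also that the paper restricts to $(AA)^*(BB)^*(AA)^*$ precisely to avoid the parity bookkeeping you carry along, and uses Ogden's lemma with the $B$'s marked rather than the plain pumping lemma; your use of the ordinary pumping lemma is fine here, but only once the characterization is repaired.
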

\begin{proof}
    Suppose by contradiction that $W(D)$ is context-free.
    Take $L = W(D) \cap (AA)^*(BB)^*(AA)^*$ which is context-free by intersection of a regular language and a context-free language.
    
    We claim that
    %$L =  \{ A^k B^i A^{n - k - i} |  n \leq 2 , i \leq \left \lfloor{n/4}\right \rfloor,   i  \leq k \leq n - 4i  \}$
    $L =  \{ A^{2i} B^{2j} A^{2k} \;|\; i \geq j, k\geq 2j \}$.
    First, when Bob plays $2j$ times in a row, he can close $2j$ parentheses.
    Alice must play at least $2j$ times before that and open at least
    $2j$ parentheses in order to have a chance to win.
    But then Bob can open $2j$ parentheses instead of closing them, which means that when Alice plays a second time, she has to be able to close $4j$ parentheses.
    This means the right hand side contains $L$.

    Let then $w = A^{2i} B^{2j} A^{2k}$ with $i \geq j$ and $k \geq 2j$.
    A winning strategy for Alice on $w$ is to do first play ``01'' $i-j$ times and then ``0'' $2j$ times so that there are $2j$ parentheses left to be closed.
    After Bob plays $2j$ times, there is an even number $2h$ of parenthesis to be closed, smaller than $4j$.
    Then Alice wins by playing ``1'' $2h$ times and ``01'' $k-h$ times, which is legal because $h \leq 2j \leq k$.
    % In particular, we must have $|w|_A \geq \frac{3}{2}|w|_B$.% More over these conditions impose that the number of time Bob plays must be less that  $(i+j+k)/4$ .

    We apply Ogden's lemma on $L$ to show that it's not context-free.
    Take $N$ obtained from the lemma.
    Consider the the word $w = A^{2N} B^{2N} A^{4N} \in W(L)$ where we mark the $2N$ occurrences of $B$.
    By the lemma, $w$ can be written as $xuyvz$ so that $x u^n y v^n z \in L$ for all $n \geq 0$ and ($x$ and $u$ and $y$) or ($y$ and $v$ and $z$) contains at least one marked position.
    Because $u$ and $v$ can be repeated, one of them only contains $B$s and the other contains only $A$s or only $B$s.
    But then in the words $x u^n y v^n z$, as $n$ increases, the number of $A$s in the left part or in the right part remains constant while the number of $B$s increases.
    For large enough $n$ this contradicts $x u^n y v^n z \in L$.
  \end{proof}
  
\bibliographystyle{plain}
\bibliography{references}

\end{document}